\newtheorem{theorem}{Theorem}
\newtheorem{definition}[theorem]{Definition}
\newtheorem{lemma}[theorem]{Lemma}
\newtheorem{claim}[theorem]{Claim}
\newcommand{\DC}{\textsc{Defective Coloring}} %Problem name
\newcommand{\Part}{\textsc{4-Partition}} % 4-Partition
\newcommand{\D}{\ensuremath\Delta^*} %Deficiency
\newcommand{\C}{\ensuremath\mathrm{\chi_d}} %# of colors
\newcommand{\TW}{\ensuremath\textrm{tw}} %treewidth
\author{R{\'e}my Belmonte\affiliationmark{1} \and Michael Lampis\affiliationmark{1} \and Valia Mitsou\affiliationmark{2}}
\affiliation{
Université Paris-Dauphine, PSL Research University, CNRS, UMR 7243, LAMSADE, Paris, France\\
Université de Paris, CNRS, IRIF, UMR 8243, Paris, France
}
\title{Defective Coloring on Classes of Perfect Graphs\thanks{This work has been supported by the ANR-14-CE25-0006 project of the French National Research Agency and by the project GRAPA - Graph Algorithms for Parameterized
Approximation - 38593YJ of PHC Sakura program.}}
\keywords{Defective Coloring, Split Graphs, Cographs}
\begin{document}

\maketitle

\begin{abstract}

In \DC\ we are given a graph $G$ and two integers $\C,\D$ and are asked if we
can $\C$-color  $G$ so that the maximum degree 
induced by any color class is at most $\D$. We show that this natural
generalization of \textsc{Coloring} is much harder on several
basic graph classes. In particular, we show that it is NP-hard on split
graphs, even when one of the two parameters $\C,\D$ is set to the smallest
possible fixed value that does not trivialize the problem ($\C=2$ or $\D=1$).
We also give a simple treewidth-based DP algorithm which, together with the
aforementioned hardness for split graphs, also completely determines the
complexity of the problem on chordal graphs.

We then consider the case of cographs and show that, somewhat surprisingly,
\DC\ turns out to be one of the few natural problems which are NP-hard on this
class.  We complement this negative result by showing that \DC\ is in P for
cographs if either $\C$ or $\D$ is fixed; that it is in P for trivially perfect
graphs; and that it admits a sub-exponential time algorithm for cographs when
both $\C$ and $\D$ are unbounded.

\end{abstract}

\section{Introduction}

In this paper we study the computational complexity of \DC, which is also known
in the literature as \textsc{Improper Coloring}: given a graph and two
parameters $\C,\D$ we want to color the graph with $\C$ colors so that every
color class induces a graph with maximum degree at most $\D$.  \DC\ is a very
natural generalization of \textsc{Graph Coloring}, which corresponds to the
case $\D=0$. As a result, since the introduction of this problem more than
thirty years ago (\cite{CCW86,andrews1985generalization}) a great
deal of research effort has been devoted to its study. Among the topics that
have been investigated are its extremal properties
(\cite{FrickH94,KimKZ14,KimKZ16,BorodinKY13,AchuthanAS11,GoddardX16}), especially
on planar graphs and graphs on surfaces
(\cite{CowenGJ97,Archdeacon87,ChoiE16,HavetS06}), as well as its asymptotic
behavior on random graphs (\cite{KangM10,KangMS08}).  Lately, the problem has
attracted renewed interest due to its applicability to communication networks,
with the coloring of the graph modeling the assignment of frequencies to nodes
and $\D$ representing some amount of tolerable interference.  This has led to
the study of the problem on Unit Disk Graphs \cite{HavetKS09} as well as
various classes of grids \cite{AraujoBGHMM12,BermondHHS10,ArchettiBHCG15}. Weighted
generalizations 
have also been
considered \cite{Bang-JensenH15,GudmundssonMS16}. More background can be found
in the survey by \cite{frick1993survey} or Kang's PhD thesis
(\cite{Kang08}).

Our main interest in this paper is to study the computational complexity of
\DC\ through the lens of structural graph theory, that is, to investigate the
classes of graphs for which the problem becomes tractable. Since \DC\
generalizes \textsc{Graph Coloring} we immediately know that it is NP-hard
already in a number of restricted graph classes and for small values of
$\C,\D$.  Nevertheless, the fundamental question we would like to pose is what
is the \emph{additional} complexity brought to this problem by the freedom to
produce a slightly improper coloring. In other words, we ask what are the graph
classes where even though \textsc{Graph Coloring} is easy, \DC\ is still hard
(and conversely, what are the classes where both are tractable).  Though some
results of this type are already known (for example 
\cite{CowenGJ97} prove that the problem is hard even on planar graphs for
$\C=2$), this question is not well-understood.  Our focus on this paper is to
study \DC\ on subclasses of perfect graphs, which are perhaps the most widely
studied class of graphs where \textsc{Graph Coloring} is in P.  The status of
the problem appears to be unknown here, and in fact its complexity on interval
and even proper interval graphs is explicitly posed as an open question in
\cite{HavetKS09}.

\begin{table}
\begin{tabular}{ll}

\begin{minipage}{0.63\textwidth}
\centering
\begin{tabular}{|c|c|}
\hline
 \textbf{Chordal graphs} & \textbf{Cographs} \\ 
\hline
 NP-hard on Split if $\C\ge 2$ & NP-hard \\
Theorem \ref{thm:split-colors} & Theorem \ref{thm:cograph-hard} \\
\hline
 NP-hard on Split if $\D\ge 1$ & In P if $\C$ or $\D$ is fixed\\
Theorem \ref{thm:sd2} & Theorems \ref{thm:DP1},\ref{thm:DP2} \\
\hline
 In P if $\C,\D$ fixed & Solvable in $n^{O\left(n^{\nicefrac{4}{5}}\right)}$ \\
Theorem \ref{thm:split-alg} & Theorem \ref{thm:subexp} \\
\hline
\multicolumn{2}{|c|}{In P on Trivially perfect for any $\C,\D$} \\
\multicolumn{2}{|c|}{Theorem \ref{thm:triv-perf}} \\
\hline
\end{tabular}
\smallskip

\end{minipage}&

\begin{minipage}{0.35\textwidth}
\centering
\includegraphics[width=\textwidth]{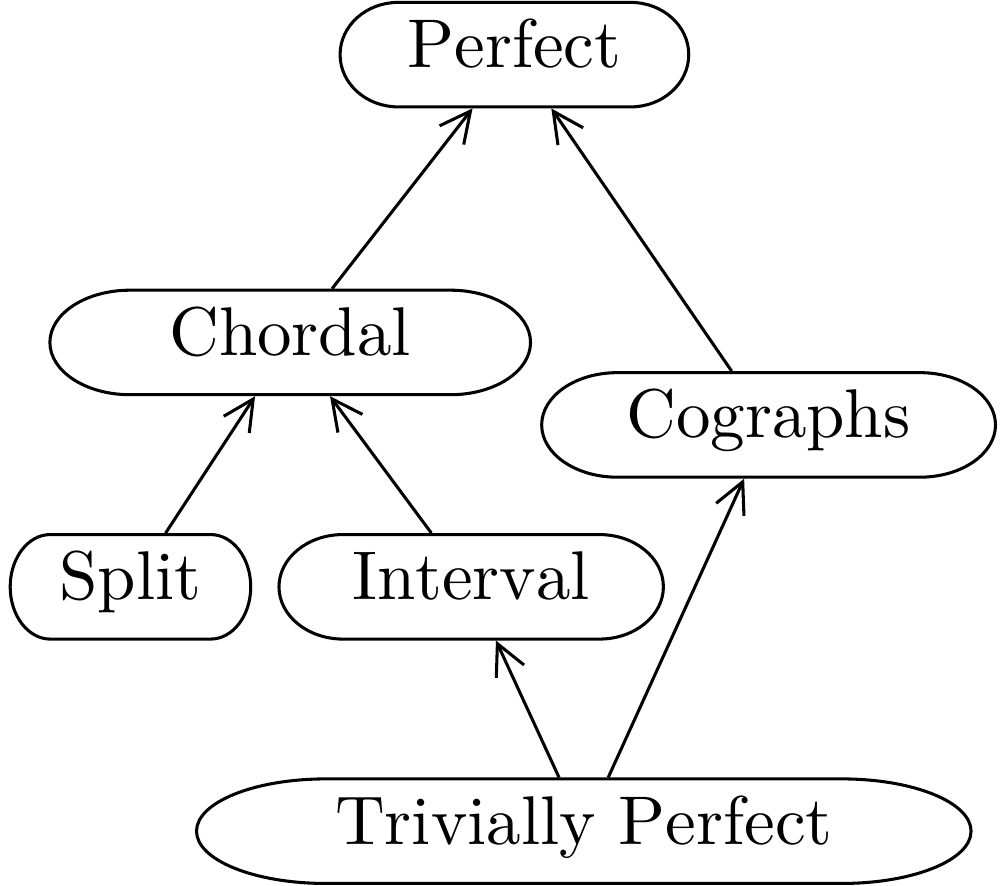}
\end{minipage}

\end{tabular}
\caption{Summary of results} \label{tab:results}
\end{table}

Our results revolve around two widely studied classes of perfect graphs: split
graphs and cographs. For split graphs we show not only that \DC\ is
NP-hard, but that it remains NP-hard even if either $\C$ or $\D$ is a constant
with the smallest possible non-trivial value ($\C\ge 2$ or $\D\ge 1$). To
complement these negative results we provide a treewidth-based DP algorithm
which runs in polynomial time if both $\C$ and $\D$ are
constant, not only for split graphs, but also for chordal graphs. This
generalizes a previous algorithm of \cite{HavetKS09} on interval
graphs.

We then go on to show that \DC\ is also NP-hard when restricted to cographs.
We note that this result is somewhat surprising since relatively few natural
problems are known to be hard for cographs. %\todo{Citation needed?} 
We complement this negative result in several ways. First, we show that \DC\
becomes polynomially solvable on trivially perfect graphs, which form a large
natural subclass of cographs. Second, we show that, unlike the case of split
graphs, \DC\ is in P on cographs if either $\C$ or $\D$ is fixed. Both of these
results are based on dynamic programming algorithms. Finally, by combining the
previous two algorithms with known facts about the relation between $\C$ and
$\D$ we obtain a sub-exponential time algorithm for \DC\ on cographs.  We note
that the existence of such an algorithm for split graphs is ruled out by our
reductions, under the Exponential Time Hypothesis. Table \ref{tab:results} 
summarizes our results. For the reader's convenience, it also
depicts an inclusion diagram for the graph classes that we mention.

\section{Preliminaries and Definitions}

%General

We use standard graph theory terminology, see e.g. \cite{Diestel}.  In
particular, for a graph $G=(V,E)$ and $u\in V$ we use $N(u)$ to denote the set
of neighbors of $u$, $N[u]$ denotes $N(u)\cup\{u\}$, and for $S\subseteq V$ we
use $G[S]$ to denote the subgraph induced by the set $S$. We use $\omega(G)$ to
denote the size of the maximum clique of $G$. A proper coloring of $G$ with
$\chi$ colors is a function $c:V\to \{1,\ldots,\chi\}$ such that for all
$i\in\{1,\ldots,\chi\}$ the graph $G[c^{-1}(i)]$ is an independent set.  We
will focus on the following generalization of coloring:

\begin{definition}

If $\C,\D$ are positive integers then a $(\C,\D)$-coloring of a graph $G=(V,E)$
is a function $c: V \to \{1,\ldots,\C\}$ such that for all $i \in
\{1,\ldots,\C\}$ the maximum degree of $G[c^{-1}(i)]$ is at most $\D$.

\end{definition}

We call the problem of deciding if a graph admits a $(\C,\D)$-coloring, for
given parameters $\C,\D$, \DC. For a graph $G$ and a coloring function $c:
V\to\mathbb{N}$ we say that the \emph{deficiency} of a vertex $u$ is $|N(u)\cap
c^{-1}(c(u))|$, that is, the number of its neighbors with the same color. The
deficiency of a color class $i$ is defined as the maximum deficiency of any
vertex colored with $i$. 

We recall the following basic facts about \DC:

\begin{lemma}\label{lem:trivial}

(\cite{Kang08}) For any $\C,\D$ and any graph $G=(V,E)$ with $\C\cdot\D \ge
|V|$ we have that $G$ admits a $(\C,\D)$-coloring.

\end{lemma}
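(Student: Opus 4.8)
The plan is to argue by a direct discharging / local-switching argument on a coloring that minimizes the total number of "bad" incidences. First I would take an arbitrary coloring $c: V \to \{1,\dots,\C\}$ that minimizes the quantity $\Phi(c) = \sum_{u \in V} |N(u) \cap c^{-1}(c(u))|$, i.e. twice the number of monochromatic edges. I claim that any minimizer is already a $(\C,\D)$-coloring. Suppose not: then some vertex $u$ has deficiency at least $\D+1$ in its own color class, so it has at least $\D+1$ neighbors sharing its color. The key step is to move $u$ to a least-loaded color class and show $\Phi$ strictly decreases, which contradicts minimality.

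For the key step, I would estimate, for each color $j \in \{1,\dots,\C\}$, the number $d_j(u) = |N(u) \cap c^{-1}(j)|$ of neighbors of $u$ colored $j$. Recoloring $u$ from $c(u)$ to $j$ changes $\Phi$ by exactly $2\bigl(d_j(u) - d_{c(u)}(u)\bigr)$, since $u$ loses $d_{c(u)}(u)$ monochromatic edges and gains $d_j(u)$ new ones (the incidences of other vertices change symmetrically, hence the factor $2$). So it suffices to exhibit a color $j \neq c(u)$ with $d_j(u) < d_{c(u)}(u)$. We are assuming $d_{c(u)}(u) \ge \D+1$. If every color $j$ had $d_j(u) \ge \D+1$, then $u$ would have at least $\C(\D+1) > \C \cdot \D \ge |V|$ neighbors — impossible, since $u$ has at most $|V|-1$ neighbors. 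Hence some color $j$ has $d_j(u) \le \D < \D+1 \le d_{c(u)}(u)$, and recoloring $u$ to this $j$ strictly decreases $\Phi$, the desired contradiction.

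The main obstacle is simply making the counting inequality airtight: one must be careful that $j$ is genuinely different from $c(u)$ (which is guaranteed since $d_{c(u)}(u) \ge \D+1$ while the chosen $j$ has $d_j(u) \le \D$), and that the bound $\C(\D+1) > |V|$ follows from the hypothesis $\C \cdot \D \ge |V|$ — indeed $\C(\D+1) = \C\D + \C \ge |V| + \C > |V| - 1 \ge |N(u)|$. Everything else is routine. An alternative, equally short route is a greedy/inductive argument: order the vertices arbitrarily and colour them one at a time, always choosing for the current vertex a colour that is currently used by at most $\D$ of its already-coloured neighbours; such a colour exists because otherwise the vertex would have more than $\C\D \ge |V|$ neighbours. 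This, however, does not immediately give a valid final colouring because later vertices can raise the deficiency of earlier ones, so the minimizer argument above is cleaner and I would present that one. $\qed$
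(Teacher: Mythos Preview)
Your argument is correct, but it takes a considerably more elaborate route than the paper does. The paper's proof is a one-line direct construction: partition $V$ arbitrarily into $\C$ parts of size at most $\lceil |V|/\C\rceil$; then every vertex has at most $\lceil |V|/\C\rceil - 1 \le |V|/\C \le \D$ same-coloured neighbours, simply because its entire colour class is that small. No potential function, no switching, no pigeonhole on degrees is needed.

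What you have written is essentially the classical Lov\'asz local-recolouring argument (minimize the number of monochromatic edges; a bad vertex can always be moved to a lighter colour). That argument is genuinely useful when one only has a bound on the \emph{maximum degree} of $G$ rather than on $|V|$ --- it yields, for instance, that any graph of maximum degree $\Delta$ admits a $(\C,\lfloor\Delta/\C\rfloor)$-colouring --- but here the hypothesis $\C\cdot\D \ge |V|$ is so strong that the balanced partition already works regardless of the edge structure. Your proof buys generality you do not need and obscures the triviality of the statement; the paper's proof makes that triviality explicit. Both are valid, but for this lemma the direct construction is the right choice.
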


\begin{proof} Partition $V$ arbitrarily into $\C$ sets of size at most $\left\lceil
\nicefrac{|V|}{\C} \right\rceil$ and color each set with a different color. The maximum
deficiency of any vertex is at most $\left\lceil \frac{|V|}{\C} \right\rceil - 1\le
\frac{|V|}{\C} \le \D$.  \end{proof}

\begin{lemma} \label{lem:basic2}

(\cite{Kang08}) If $G$ admits a $(\C,\D)$-coloring then $\omega(G)\le \C\cdot
(\D+1)$.

\end{lemma}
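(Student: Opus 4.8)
The plan is to argue directly about a maximum clique and the behavior of any valid coloring when restricted to it. Let $K\subseteq V$ be a clique of size $\omega(G)$, and let $c$ be a $(\C,\D)$-coloring of $G$; I will show $|K|\le \C\cdot(\D+1)$.

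First I would observe that for each color $i\in\{1,\ldots,\C\}$, the set $K_i := K\cap c^{-1}(i)$ induces a clique, since it is a subset of $K$. Consequently, if a vertex $u\in K_i$, then $u$ is adjacent to every other vertex of $K_i$, all of which share its color, so the deficiency of $u$ is at least $|K_i|-1$. Since $c$ is a $(\C,\D)$-coloring, every vertex has deficiency at most $\D$, and therefore $|K_i|-1\le \D$, i.e. $|K_i|\le \D+1$, for every $i$ with $K_i\neq\emptyset$ (and trivially when $K_i=\emptyset$).

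The second step is simply to sum this bound over all color classes. The sets $K_1,\ldots,K_{\C}$ partition $K$, so
\[
\omega(G) = |K| = \sum_{i=1}^{\C} |K_i| \le \sum_{i=1}^{\C} (\D+1) = \C\cdot(\D+1),
\]
which is exactly the claimed inequality.

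There is no real obstacle here: the only thing to be careful about is the elementary observation that a monochromatic set inside a clique is again a clique, so that the single degree constraint $\D$ inside that color class already forces the size bound $\D+1$; everything else is a one-line pigeonhole-style summation. \qed
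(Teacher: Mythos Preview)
Your proof is correct and essentially the same as the paper's: both rest on the observation that any color can appear at most $\D+1$ times in a clique, and hence a clique can have at most $\C(\D+1)$ vertices. The paper phrases this by contradiction (a clique of size $\C(\D+1)+1$ would force some color to appear more than $\D+1$ times), while you do the direct summation, but the argument is the same.
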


\begin{proof}
For the sake of contradiction, assume that $G$ has a clique of size
$\C\cdot(\D+1) +1$, then any coloring of $G$ with $\C$ colors must give the
same color to strictly more than $\D+1$ vertices of the clique, which implies
that these vertices have deficiency at least $\D+1$.
\end{proof}

%Graph Classes

Let us now also give some quick reminders regarding the definitions of the
graph classes we consider in this paper.

A graph $G=(V,E)$ is a \emph{split} graph if $V = K\cup S$ where $K$ induces a
clique and $S$ induces an independent set. A graph $G$ is \emph{chordal} if it
does not contain any induced cycles of length four or more. It is well known
that all split graphs are chordal; furthermore it is known that the class of
chordal graphs contains the class of interval graphs, and that chordal graphs
are perfect. For more information on these standard containments see
\cite{BLS99}.
 
A graph is a \emph{cograph} if it is either a single vertex, or the disjoint
union of two cographs, or the complete join of two cographs
\cite{seinsche1974}. Note that the class of cographs is easily seen to be
closed under complement. As a result, complete $k$-partite graphs are cographs
(their complement is a union of cliques, which are themselves cographs); this
fact will be used later. A graph is \emph{trivially perfect} if in all induced
subgraphs the maximum independent set is equal to the number of maximum cliques
\cite{Gol78}.  Trivially perfect graphs are exactly the cographs which are
chordal \cite{YanCC96}, and hence are a subclass of cographs, which are a
subclass of perfect graphs.  We recall that \textsc{Graph Coloring} is
polynomial-time solvable in all the mentioned graph classes, since it is
polynomial-time solvable on perfect graphs \cite{GLS88}, though of course for
all these classes simpler and more efficient combinatorial algorithms are
known.

We will also use the notion of treewidth for the definition of which we refer
the reader to \cite{BodlaenderK08,CyganFKLMPPS15}.

\section{NP-hardness on Cographs} \label{sec:cographs}

In this section we establish that \DC\ is already NP-hard on the very
restricted class of cographs. To this end, we show a reduction from \Part.

\begin{definition} In \Part\ we are given a set $A$ of $4n$ elements, a size
function $s:A\to \mathbb{N}^+$ which assigns a value to each element, and a
target integer $B$.  We are asked if there exists a partition of $A$ into $n$
sets of four elements (quadruples), such that for each set the sum of its
elements is exactly $B$. \end{definition}

\Part\ has long been known to be strongly NP-hard, that is, NP-hard even if all
values are polynomially bounded in $n$. In fact, the reduction given in
\cite{GJ79} establishes the following, slightly stronger statement.

\begin{theorem}\label{thm:partition} \Part\ is strongly NP-complete if $A$ is
given to us partitioned into four sets of equal size $A_1,A_2,A_3,A_4$ and any
valid solution is required to place exactly one element from each $A_i,
i\in\{1,\ldots,4\}$ in each quadruple.  \end{theorem}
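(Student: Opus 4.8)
The plan is to derive this strengthened statement by reducing from \textsc{Numerical Matching with Target Sums}, which is known to be strongly NP-complete~\cite{GJ79}: there one is given two disjoint $m$-element sets $X=\{x_1,\ldots,x_m\}$, $Y=\{y_1,\ldots,y_m\}$ with positive integer sizes $s(\cdot)$ and a vector of positive integer targets $(B_1,\ldots,B_m)$, and must decide whether there exist bijections $\alpha,\beta$ of $\{1,\ldots,m\}$ with $s(x_{\alpha(k)})+s(y_{\beta(k)})=B_k$ for every $k$; strong NP-completeness means this stays NP-hard when all of $s(x_i),s(y_j),B_k$ are bounded by a polynomial in $m$. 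The one real design decision --- and the reason one cannot simply invoke plain strong NP-hardness of \Part\ itself --- is to start from a source problem that already carries a multipartite structure, so that the partition of $A$ into $A_1,A_2,A_3,A_4$ is inherited rather than artificially imposed; \textsc{Numerical Matching with Target Sums} is exactly such a problem. Since membership of the restricted \Part\ problem in NP is obvious, it suffices to exhibit a polynomial-time reduction producing only polynomially bounded sizes.

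Given such an instance, I would set $n:=m$, $M:=\max_k B_k$, and build the \Part\ instance with the four prescribed equal parts as follows: $A_1=\{a^1_1,\ldots,a^1_m\}$ with $s(a^1_i):=s(x_i)$; $A_2=\{a^2_1,\ldots,a^2_m\}$ with $s(a^2_j):=s(y_j)$; $A_3=\{a^3_1,\ldots,a^3_m\}$ with $s(a^3_k):=M-B_k+1$; and $A_4=\{a^4_1,\ldots,a^4_m\}$ with $s(a^4_\ell):=M$ for all $\ell$. Finally set $B:=2M+1$. Each assigned size is a positive integer (here $M-B_k+1\ge 1$) and is bounded by a polynomial in $m$, and $A=A_1\cup A_2\cup A_3\cup A_4$ has $4n$ elements evenly split into four parts, so this is a legitimate reduction for the strong sense and is plainly computable in polynomial time.

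For correctness, observe that any quadruple $\{a^1_i,a^2_j,a^3_k,a^4_\ell\}$ taking exactly one element from each part has total size $s(x_i)+s(y_j)+(M-B_k+1)+M$, which equals $B=2M+1$ precisely when $s(x_i)+s(y_j)=B_k$. Hence bijections $\alpha,\beta$ solving the source instance give the \Part\ solution consisting of the quadruples $\{a^1_{\alpha(k)},a^2_{\beta(k)},a^3_k,a^4_k\}$, $k=1,\ldots,m$; conversely, in any solution of the restricted \Part\ instance, for each $k$ let $a^1_{\alpha(k)}$ and $a^2_{\beta(k)}$ be the members of $A_1$ and $A_2$ lying in the quadruple that contains $a^3_k$ --- since each quadruple contains exactly one element of each part and there are $m$ quadruples, $\alpha$ and $\beta$ are bijections, and the sum condition forces $s(x_{\alpha(k)})+s(y_{\beta(k)})=B_k$, a ``yes'' for the source instance. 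The auxiliary part $A_4$ of $m$ identical elements only pads the bipartite matching condition into a four-part one and imposes no constraint of its own, and the single place where the argument genuinely uses the theorem's extra hypothesis --- that every quadruple picks exactly one element of each $A_t$ --- is the backward direction, so there is no serious obstacle. (Alternatively, one may verify directly that the classical reduction of~\cite{GJ79} from \textsc{3-Dimensional Matching} to \Part\ already outputs instances of precisely this four-partite shape, with three parts indexed by the ground sets of the matching instance and the fourth by filler elements.)
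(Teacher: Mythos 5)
Your reduction is correct: starting from \textsc{Numerical Matching with Target Sums} (strongly NP-complete, [SP17] in Garey--Johnson), the sizes $s(x_i)$, $s(y_j)$, $M-B_k+1$, $M$ are positive, polynomially bounded, and a one-per-part quadruple sums to $B=2M+1$ exactly when $s(x_i)+s(y_j)=B_k$, so both directions go through; you also correctly isolate the backward direction as the only place where the ``exactly one element per $A_t$'' hypothesis is used. This is, however, a genuinely different route from the paper's: the paper offers no explicit construction at all and simply observes that the classical reduction to \Part\ in \cite{GJ79} (from \textsc{3-Dimensional Matching}, with three parts indexed by the ground sets and a fourth part of filler elements per triple) already outputs instances of the required four-partite shape --- exactly the alternative you mention parenthetically at the end. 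What your version buys is a self-contained, easily checkable two-paragraph proof at the cost of importing strong NP-completeness of \textsc{Numerical Matching with Target Sums}; what the paper's citation buys is brevity at the cost of asking the reader to re-inspect the Garey--Johnson construction. One cosmetic point: since the theorem claims strong NP-\emph{completeness}, you are right to note membership in NP explicitly, which the paper leaves implicit.
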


\begin{theorem}\label{thm:cograph-hard} \DC\ is NP-complete even when restricted to complete
$k$-partite graphs. Therefore, \DC\ is NP-complete on cographs.  \end{theorem}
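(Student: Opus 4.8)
The plan is to reduce from the variant of \Part\ in Theorem~\ref{thm:partition}, where $A = A_1 \cup A_2 \cup A_3 \cup A_4$ with $|A_i| = n$, and every quadruple must take exactly one element from each $A_i$ and sum to $B$. Given such an instance, I would build a complete $k$-partite graph whose parts correspond roughly to the elements of $A$: for each element $a \in A_i$ create a group of $s(a)$ vertices, and let these groups be the parts of the complete multipartite graph (so two vertices are adjacent iff they come from different elements, possibly from the same $A_i$ or from different $A_i$). Membership of an element in a quadruple will be encoded by which color the corresponding group receives. We set the number of colors to $\C = n$ (one color per quadruple) and choose $\D$ so that a color class can ``absorb'' exactly four element-groups whose sizes sum to $B$: concretely, a monochromatic set of vertices coming from groups of total size $B$ induces a complete multipartite graph in which the maximum degree is $B$ minus the size of the smallest group it contains, and we want this to be at most $\D = B - 1$ while making it impossible to pack five or more groups, or groups of total size exceeding $B$, into one color. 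Padding vertices / dummy elements can be added to each $A_i$ if needed to make the arithmetic work out cleanly, using Lemmas~\ref{lem:trivial} and~\ref{lem:basic2} as sanity checks on the chosen parameters.

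The forward direction is straightforward: a valid \Part\ solution gives a partition of the element-groups into $n$ quadruples, each of total size $B$; coloring the groups of the $j$-th quadruple with color $j$ yields color classes that are complete multipartite graphs on $B$ vertices with parts of size $\ge 1$, hence maximum degree $\le B-1 = \D$, so we get a valid $(\C,\D)$-coloring. The reverse direction is the crux: from a $(\C,\D)$-coloring I must recover quadruples. The key structural point is that a color class is a complete multipartite graph, so if it contains vertices from groups of total size $t$ and its smallest contained group has size $m$, its maximum degree is exactly $t - m$; forcing this to be $\le B-1$ while the $s(a)$ are positive integers should, together with a counting argument over all $n$ color classes and the total number $4nB$-ish of vertices (or a suitable padded total), force each color class to consist of exactly four full groups summing to exactly $B$ — no splitting of a group across colors can happen because vertices in the same group are non-adjacent and splitting only helps degrees, so the extremal/counting bound must be tight with whole groups. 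The per-$A_i$ constraint (exactly one element from each $A_i$ per quadruple) is enforced by the same counting being tight separately within each $A_i$, or by a small gadget.

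I expect the main obstacle to be pinning down the parameters so that the reverse direction is forced: I need to rule out a color class taking, say, three large groups plus many singletons, or taking groups from the wrong distribution of $A_i$'s, purely through the degree bound $\D$ and a global vertex count. This likely requires introducing $O(n)$ dummy singleton vertices (or scaling all sizes) so that every color class is compelled to have exactly $B$ vertices and the only flexibility is which groups go where, and then arguing that within $B$ vertices the maximum-degree-$\le B-1$ condition is automatically satisfied, so the real work is done by the ``exactly $B$ per class'' count and by a separate mechanism (e.g.\ giving the four $A_i$'s disjoint size ranges, or adding four ``anchor'' parts) that forces one element from each $A_i$ into each class. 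Once the gadget is fixed, membership in cographs (indeed complete $k$-partite graphs, which are cographs as iterated joins of edgeless graphs) is immediate, and NP-membership is clear, completing the proof.
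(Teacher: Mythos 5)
Your reduction source (the restricted \Part\ of Theorem~\ref{thm:partition}) and the overall shape of the construction (one part of a complete multipartite graph per element, one color per quadruple, $\C=n$) match the paper's. But the heart of the proof --- forcing, in the converse direction, that each color class is exactly four whole element-groups summing to exactly $B$ --- is precisely the part you leave open, and your proposed parameters do not achieve it. With $\D=B-1$, a color class whose groups have total size $t$ and smallest group size $m$ has maximum deficiency $t-m$, so the constraint is only $t\le B-1+m$. Since in \Part\ the element sizes are bounded below by a constant fraction of $B$, $m$ can be on the order of $B/4$, so a single color class may contain up to roughly $5B/4$ vertices; the global count $\sum_i |c^{-1}(i)| = nB$ then cannot force each class to have exactly $B$ vertices, because classes with fewer than $B$ vertices violate nothing. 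Your claim that ``splitting a group across colors only helps degrees'' is also not a proof that splitting cannot occur --- ruling it out requires a lower bound on every color class's size, which itself must be derived from upper bounds on all the other classes, a chain of estimates that only closes if all group sizes are tightly concentrated.

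The paper resolves exactly these issues with machinery your proposal does not supply: it first rescales every size by adding $5^iB+5^5n^2B$, so that all new sizes lie in $(B'/4-5B'/n^2,\,B'/4+5B'/n^2)$; this concentration drives a sandwich argument (each color used at most $5B'/4+O(B'/n^2)$ times, hence at least $5B'/4-O(B'/n)$ times, hence in exactly five parts, hence each part monochromatic). Crucially, it also \emph{duplicates} the independent set of each $A_1$-element, so that every color class consists of five parts and the deficiency of a vertex in either of the two smallest, equal-sized parts is exactly the sum of the other four parts, i.e.\ exactly the quadruple sum; the inequality $t-m\le\D$ then really does encode ``sum $\le B'$,'' and a final auxiliary-digraph recoloring argument guarantees each class contains a $V_x^1$ copy so this reading is legitimate. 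Without the rescaling, the duplication trick, and the monochromaticity argument, the converse direction does not go through, so the proposal as written has a genuine gap rather than an alternative proof.
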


\begin{proof}
We start our reduction from an instance of \Part\ where the set of elements $A$
is partitioned into four equal-size sets as in Theorem \ref{thm:partition}. We
first transform the instance by altering the sizes of all elements as follows:
for each element $x\in A_i$ we set $s'(x) := s(x) + 5^i B + 5^5n^2B$ and we
also set $B' := B+B\cdot\sum_{i=1}^4 5^i+ 4\cdot 5^5 n^2B$.  After this
transformation our instance is ``ordered'', in the sense that all elements of
$A_{i+1}$ have strictly larger size than all elements of $A_i$.  Furthermore,
it is not hard to see that the answer to the problem did not change, as any
quadruple that used one element from each $A_i$ and summed up to $B$ now sums
up to $B'$. In addition, we observe that in the new instance the condition that
exactly one element must be used from each set is imposed by the element sizes
themselves: a quadruple that contains two or more elements of $A_4$ will have
sum strictly more than $B'$, while one containing no elements of $A_4$ will
have sum strictly less than $B'$. Similar reasoning can then be applied to
$A_3,A_2$. We note that the element sizes now have the extra property that
$s'(x) \in (B'/4 - 5B'/n^2, B'/4 + 5B'/n^2)$. Indeed, the largest possible size
is at most $M = B+5^4B +5^5Bn^2$, while $B'/4 \ge 5^5Bn^2$, so $M-B'/4 \le 5^4B
+ B \le 5B'/n^2$.  Similarly, the smallest value is at least $L = 1 + 5B +
5^5Bn^2$ while $B'/4 \le 5^5Bn^2 + 5^5B$ so $B'/4-L \le 5^5B$

We now construct an instance of \DC\ as follows. We set $\D=B'$ and $\C=n$.  To
construct the graph $G$, for each element $x\in A_2\cup A_3\cup A_4$ we create
an independent set of $s'(x)$ new vertices which we will call $V_x$. For each
element $x\in A_1$ we construct two independent sets of $s'(x)$ new vertices
each, which we will call $V_x^1$ and $V_x^2$. Finally, we turn the graph into a
complete $5n$-partite graph, that is, we add all possible edges while retaining
the property that all sets $V_x$ and $V_x^i$ remain independent.

Let us now argue for the correctness of the reduction. First, suppose that
there exists a solution to our (modified) \Part\ instance where each quadruple
sums to $B'$. Number the quadruples arbitrarily from $1$ to $n$ and consider
the $i$-th quadruple $(x_i^1, x_i^2, x_i^3, x_i^4)$ where we assume that for
each $j\in\{1,\ldots,4\}$ we have $x_i^j\in A_j$. Hence, $s'(x_i^1)$ is minimum
among the sizes of the elements of the quadruple. We now use color $i$ for all
the vertices of the sets $V_{x_i^j}$ for $j\in\{2,3,4\}$ as well as the sets
$V_{x_i^1}^1, V_{x_i^1}^2$. We continue in this way using a different color for
each quadruple and thus color the whole graph (since the quadruples use all the
elements of $A$). We observe that for any color $i$ the vertices with maximum
deficiency are those from $V_{x_i^1}^1$ and $V_{x_i^1}^2$, and all these
vertices have deficiency exactly $x_i^1+x_i^2+x_i^3+x_i^4 = B'$. Hence, this is
a valid solution.

For the converse direction of the reduction, suppose we are given a
$(\C,\D)$-coloring of the graph we constructed. We first need to argue that
such a coloring must have a very special structure.  In particular, we will
claim that in such a coloring each independent set $V_x$ or $V_x^i$ must be
monochromatic.  Towards this end we formulate a number of claims.

\begin{claim}  

Every color $i$ is used on at most $5B'/4 + 25 B'/n^2 $ vertices. 

\end{claim}

\begin{proof}
We will assume that $i$ is used at least $5B'/4+25B'/n^2+1$ times and obtain a
contradiction. Since the size of the largest independent set $V_x$ is at most
$B'/4 + 5B'/n^2$ we know that color $i$ must appear in at least six different
independent sets.  Among the independent sets in which $i$ appears let $V_x$ be
the one in which it appears the minimum number of times.  The deficiency of a
vertex colored with $i$ in this set is at least $\frac{5}{6} |c^{-1}(i)| \ge
\frac{25B'}{24}>B' = \D$.
\end{proof}

Because of the previous claim, which states that no color appears too many
times, we can also conclude that no color appears too few times. 

\begin{claim}
Every color $i$ is used on at least $5B'/4 - 50 B'/n $ vertices.
\end{claim}

\begin{proof}
First, note that $|V| \ge 5nB'/4 - 25B'/n$ because we have created $5n$
independent sets each of which has size more than $B'/4-5B'/n^2$.  By the
previous claim any color $j\neq i$ has $|c^{-1}(j)| \le 5B'/4 + 25B'/n^2$.
Therefore $\sum_{j\neq i} |c^{-1}(j)| \le (n-1) (5B'/4 + 25B'/n^2)$. We have
$|c^{-1}(i)| = |V| - \sum_{j\neq i} |c^{-1}(j)| \ge \frac{5nB'}{4} -
\frac{25B'}{n} - (n-1)\frac{5B'}{4} - (n-1)\frac{25B'}{n^2} = \frac{5B'}{4} -
\frac{50B'}{n} + \frac{25B'}{n^2} > \frac{5B'}{4} - \frac{50B'}{n}$. 
\end{proof}

Given the above bounds on the size of each color class we can now conclude that
each color appears in exactly five independent sets $V_x$.

\begin{claim}

For each color $i$ the graph induced by $c^{-1}(i)$ is complete $5$-partite.

\end{claim}

\begin{proof}
First, observe that by the previous claim, there must exist at least $5$ sets
$V_x$ or $V_x^i$ that intersect $c^{-1}(i)$, because $|c^{-1}(i)| \ge 5B'/4
-O(B'/n)$ while the size of each such set is at most $B'/4 + O(B'/n^2)$;
therefore, the size of any four sets is strictly smaller than $|c^{-1}(i)|$
(assuming of course that $n$ is sufficiently large). Suppose now that
$c^{-1}(i)$ intersects $6$ different sets, and consider the independent set
$V_x$ on which color $i$ appears at least once but a minimum number of times. A
vertex colored $i$ in this set will have deficiency at least
$\frac{5}{6}(\frac{5B'}{4}-\frac{50B'}{n}) = \frac{25 B'}{24} -
O(\frac{B'}{n})$, which is strictly greater than $B'$ for sufficiently large
$n$. Hence, color $i$ appears in exactly $5$ independent sets. 
\end{proof}

\begin{claim}

In any valid solution every maximal independent set of $G$ is monochromatic.

\end{claim}

\begin{proof}
Consider color $i$, which by the previous claim appears in exactly $5$
independent sets. Suppose that one of these is not monochromatic, say colors
$i,j$ appear in $V_x$. Without loss of generality let $i$ be the minority
color, that is, $i$ appears in at most $|V_x|/2$ vertices of $V_x$. Then we
obtain a contradiction as follows: the total number of times $i$ is used in the
graph is at most $|c^{-1}(i)|\le4(\frac{B'}{4}+\frac{5B'}{n^2}) +
\frac{1}{2}(\frac{B'}{4}+\frac{5B'}{n^2})$, where the first term uses the
general upper bound on the size of all other independent sets where $i$
appears, and the second term uses the same upper bound on $|V_x|$. Thus,
$|c^{-1}(i)| \le \frac{9B'}{8} + O(\frac{B'}{n^2})$ which is strictly smaller
than $\frac{5B'}{4} - \frac{50B'}{n}$, the minimum number of times that $i$
must be used (for sufficiently large $n$).
\end{proof}

We are now ready to complete the converse direction of the reduction. Consider
the vertices of $c^{-1}(i)$, for some color $i$. By the previous sequence of
claims we know that they appear in and fully cover $5$ independent sets $V_x$
or $V_x^i$. We claim that for each $j\in\{2,3,4\}$ any color $i$ is used in
exactly one $V_x$ with $x\in A_j$. This can be seen by considering the
deficiency of the vertices of the smallest independent set where $i$ appears.
The deficiency of these vertices is equal to $x_i^1+x_i^2+x_i^3+x_i^4$, which
are the sizes of the four larger independent sets. By the construction of the
modified \Part\ instance, any quadruple that contains two elements of $A_4$
will have sum strictly greater than $B'$. Hence, these elements must be evenly
partitioned among the color classes, and with similar reasoning the same
follows for the elements of $A_3,A_2$.

We thus arrive at a situation where each color $i$ appears in the independent
sets $V_{x_i^4}, V_{x_i^3}, V_{x_i^2}$ as well as two of the ``small''
independent sets. Recall that all ``small'' independent sets were constructed
in two copies of the same size $V_x^1, V_x^2$. We would now like to ensure that
all color classes contain one small independent set of the form $V_{x_i^1}^1$.
If we achieve this the argument will be complete: we construct the quadruple
$(x_i^4,x_i^3,x_i^2,x_i^1)$ from the color class $i$, and the deficiency of the
vertices of the remaining small independent set ensures that the sum of the
elements of the quadruple is at most $B'$. By constructing $n$ such quadruples
we conclude that they all have sum exactly $B'$, since the sum of all elements
of the \Part\ instance is (without loss of generality) exactly $nB'$.

To ensure that each color class contains an independent set $V_x^1$ we first
observe that we can always exchange the colors of independent sets $V_x^1$ and
$V_x^2$, since they are both of equal size (and monochromatic). Construct now
an auxiliary graph with $\C$ vertices, one for each color class and a directed
edge for each $x\in A_1$. Specifically, if for $x\in A_1$ the independent set
$V_x^1$ is colored $i$ and the set $V_x^2$ is colored $j$ we place a directed
edge from $i$ to $j$ (note that this does not rule out the possibility of
self-loops). In the auxiliary graph, each vertex that does not have a self-loop
is incident on two directed edges. We would like all such vertices to end up
having out-degree 1, because then each color class would contain an independent
set of the form $V_x^1$.  The main observation now is that in each weakly
connected component that contains a vertex $u$ with out-degree 0 there must
also exist a vertex $v$ of out-degree 2. Exchanging the colors of $V_x^1$ and
$V_x^2$ corresponds to flipping the direction of an edge in the auxiliary
graph. Hence, we can take a maximal directed path starting at $v$ and flip all
its edges, while maintaining a valid coloring of the original graph.  This
decreases the number of vertices with out-degree 0 and therefore repeating this
process completes the proof. 
\end{proof}

\section{Polynomial Time Algorithm on Trivially Perfect Graphs}

In this section, we complement the NP-completeness proof from
Section~\ref{sec:cographs}  by giving a polynomial time
algorithm for \DC~on the class of trivially perfect graphs, which form a large subclass of cographs. We will heavily
rely on the following equivalent characterization of trivially perfect graphs
given by \cite{Gol78}:

\begin{theorem} A graph is trivially perfect if and only if it is the
comparability graph of a rooted tree.  \end{theorem}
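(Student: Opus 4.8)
The plan is to prove both implications by exploiting the recursive structure of trivially perfect graphs, and throughout I will phrase the ``tree'' side as a rooted \emph{forest}: given a rooted forest $F$, its comparability graph $\mathrm{comp}(F)$ has the nodes of $F$ as vertices, with an edge between $u$ and $v$ exactly when one is an ancestor of the other. The extra generality over ``rooted tree'' is only needed to accommodate disconnected graphs; a \emph{connected} trivially perfect graph will correspond to a genuine rooted tree, so I keep track of connectivity along the way.

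For the direction that $\mathrm{comp}(F)$ is trivially perfect, I would first note that every induced subgraph of the comparability graph of a rooted forest is again of this form: restricting the ancestor order to a vertex subset $S$ yields a poset in which the elements below any fixed element still form a chain, which is exactly what characterises the ancestor order of a rooted forest. Hence it suffices to show that for any rooted forest $F$ the independence number of $\mathrm{comp}(F)$ equals its number of maximal cliques. Cliques of $\mathrm{comp}(F)$ are precisely chains of the forest order and independent sets are precisely antichains; a maximal chain is forced to consist of a leaf of $F$ together with all of its ancestors, so the maximal cliques are in bijection with the leaves of $F$. On the other hand the set of leaves of $F$ is itself an antichain, giving a lower bound of (number of leaves) on the independence number, while a straightforward induction on the structure of $F$ shows that $F$ partitions into exactly (number of leaves) chains, so no antichain can be larger. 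The two quantities therefore coincide, on $\mathrm{comp}(F)$ and on all of its induced subgraphs, so $\mathrm{comp}(F)$ is trivially perfect.

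For the converse I would use the characterisation recalled above that trivially perfect graphs are exactly the chordal cographs, together with the recursive definition of cographs. If $G$ is a chordal cograph with at most one vertex the claim is immediate. If $G$ is disconnected, each connected component is again a chordal cograph and I recurse, outputting the disjoint union of the forests obtained for the components. If $G$ is connected and has at least two vertices then $G$ is the complete join of two nonempty cographs $G_1$ and $G_2$; if neither $G_i$ were a clique, two nonadjacent vertices chosen from each would induce a $C_4$, contradicting chordality, so some $G_i$ is a clique and any vertex $v$ of it is universal in $G$. I then recurse on the chordal cograph $G-v$ and build a forest for $G$ by creating a new root whose children are the roots of the forest produced for $G-v$: since $v$ is universal and the new root is an ancestor of every other node, the comparability graph of this forest is exactly $G$. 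Unwinding the recursion yields the required rooted forest, which is a rooted tree whenever $G$ is connected.

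The parts I expect to be routine are the (in)equalities for the independence number and the number of maximal cliques under the disjoint-union and add-a-root operations. The step needing the most care is, on the forward side, the claim that a rooted forest partitions into as many chains as it has leaves (the ingredient that pins down the independence number), and on the backward side, correctly matching the cograph operations on $G$ --- disjoint union, and adding a universal vertex --- with the operations on forests --- disjoint union, and adding a new root above all existing roots --- while handling the tree-versus-forest bookkeeping; the $C_4$ argument producing the universal vertex in the connected case is the only genuinely graph-theoretic point and is short.
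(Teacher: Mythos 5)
The paper does not prove this statement at all --- it is quoted from Golumbic \cite{Gol78} --- so there is no in-paper proof to compare against; your proposal supplies a correct, essentially standard argument. Both directions check out. For the forward direction, the key facts are all right: induced subgraphs of forest comparability graphs are again forest comparability graphs (a finite poset is a forest order exactly when every principal down-set is a chain, a property inherited by restriction), maximal cliques are maximal chains and hence root-to-leaf paths in bijection with the leaves, the leaves form an antichain, and the decomposition of a rooted forest into one chain per leaf (peel off the uncovered bottom segment of each root-to-leaf path) bounds every antichain by the number of leaves via the easy direction of Dilworth. For the converse, your reduction to ``chordal cograph'' plus the $C_4$ argument extracting a universal vertex from the join, followed by deleting it and re-rooting, is clean and correct; note only that this direction is not from first principles, since it leans on the equivalence ``trivially perfect $=$ chordal cograph,'' which the paper also merely cites (\cite{YanCC96}) --- acceptable in context, but it means you have traded one cited characterization for another rather than arguing directly from the independence-number/maximal-clique definition (Golumbic's original proof instead extracts the universal vertex directly from that definition, via the observation that a connected trivially perfect graph has a vertex meeting every maximal clique). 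Your tree-versus-forest bookkeeping is actually a point in your favor: as literally stated the theorem fails for disconnected graphs (two isolated vertices are trivially perfect but are the comparability graph of no rooted tree), and your version --- rooted forests in general, rooted trees in the connected case --- is the correct statement and the one the paper implicitly uses when it partitions $T$ into levels.
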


In other words, for every trivially perfect graph $G$, there exists a rooted
tree $T$ such that making every vertex in the tree adjacent to all of its
descendants yields a graph isomorphic to $G$. We refer to $T$ as the
\textit{underlying rooted tree} of $G$. We recall that it is known how to
obtain $T$ from $G$ in polynomial (in fact linear) time \cite{YanCC96}.

We are now ready to describe our algorithm.  The following observation is one
of its basic building blocks.

\begin{lemma} \label{lem:triv1}

Let $G=(V,E)$ be a trivially perfect graph, $T$ its underlying rooted tree, and
$u\in V$ be an ancestor of $v\in V$ in $T$. Then $N[v]\subseteq N[u]$.

\end{lemma}

\begin{proof}
Any vertex $w\in N[v]$ must be either a descendant of $v$, in which case it is
also a descendant of $u$ and $w\in N[u]$, or another ancestor of $v$. However,
because $T$ is a tree, if $w$ is an ancestor of $v$, then $w$ is either an
ancestor or a descendant of $u$.
\end{proof}

\begin{theorem}\label{thm:triv-perf}
\DC~can be solved in polynomial time on trivially perfect graphs.
\end{theorem}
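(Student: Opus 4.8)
The plan is to design a dynamic programming algorithm over the underlying rooted tree $T$ of the trivially perfect graph $G$. The key structural fact is Lemma~\ref{lem:triv1}: the closed neighborhoods along any root-to-leaf path are nested, so if $u$ is an ancestor of $v$ then $N[v]\subseteq N[u]$. This means that, for a fixed subtree $T_u$ rooted at $u$, all the ancestors of $u$ outside $T_u$ form a set of vertices that is uniformly adjacent to \emph{every} vertex of $T_u$; in other words, the only information about the ``outside world'' that matters when we process $T_u$ is how many already-colored ancestors of $u$ use each color. More precisely, when we try to decide whether $G[V(T_u)\cup\{\text{colored ancestors}\}]$ can be extended to a valid $(\C,\D)$-coloring, the relevant state is a vector $(a_1,\dots,a_\C)$ where $a_c$ is the number of ancestors of $u$ colored $c$, truncated at $\D+1$ (since once a color is used $\D+1$ times among the ancestors, no vertex of $T_u$ can ever take that color, and larger values are irrelevant). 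Thus there are at most $(\D+2)^\C$ distinct meaningful states, and since $\C\cdot(\D+1)\ge\omega(G)$ must hold by Lemma~\ref{lem:basic2} (otherwise we reject immediately) but also we may assume $\C\le n$ and $\D\le n$, one must be a little careful: the naive state space is not obviously polynomial. Here is where the structure of $T$ saves us --- the number of ancestors of $u$ is bounded by the depth of $u$, and more importantly the number of \emph{distinct} ancestor-color-count vectors that can actually arise is polynomial, because the ancestors of $u$ form a path in $T$ and we only need to remember, for the at most $d$ ancestors, a non-decreasing-in-a-suitable-sense summary; I would formalize the state as the multiset of ancestor colors, but collapsed using the observation that two colors that are ``equivalent'' (used the same number of times so far and with no remaining constraint difference) can be permuted freely, so the state is really an unordered partition, giving a polynomial bound when combined with the clique-size constraint.

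Concretely, I would proceed as follows. First, compute $T$ in linear time \cite{YanCC96}, and immediately reject if $\C(\D+1)<\omega(G)$, where $\omega(G)$ equals the maximum depth (number of vertices on a root-to-leaf path) of $T$. Next, process $T$ bottom-up. For a leaf $u$ with a given ancestor-color profile $P$, vertex $u$ can be colored $c$ iff adding $u$ to color $c$ does not push the deficiency of any already-placed ancestor of color $c$ above $\D$ and $u$'s own deficiency (which equals the number of ancestors colored $c$, since $u$ has no descendants) stays $\le\D$; record the set of achievable profiles $P'=P$ together with $c$ added. For an internal node $u$ with children $u_1,\dots,u_t$, and a given incoming ancestor profile $P$: we first choose a color $c$ for $u$ (feasible exactly as in the leaf case, checking $u$'s deficiency against its ancestors of color $c$ --- note $u$'s descendants in its own subtree also count toward $u$'s deficiency, so we must thread the number of $u$'s descendants that end up colored $c$ as part of the DP value, which is why the DP table at $u$ should be indexed not just by feasibility but by, for each color, the number of vertices of $T_u$ using that color, again truncated at $\D+1$). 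Then we must combine the children: each child $u_j$ sees incoming ancestor profile $P$ augmented by $u$'s color $c$; but crucially the children's subtrees are pairwise \emph{completely non-adjacent} except through their common ancestors (this is exactly the tree-comparability structure), so a child $u_j$'s coloring only interacts with another child $u_k$'s coloring via the shared ancestors $\{u\}\cup(\text{ancestors of }u)$ --- hence the vertices colored $c'$ inside $T_{u_j}$ do \emph{not} contribute to the deficiency of vertices inside $T_{u_k}$. Therefore the children can be processed essentially independently given $P\cup\{c\}$, \emph{except} that their choices jointly determine how many vertices of each color sit ``below $u$'', which feeds back into $u$'s own deficiency and into the deficiency bound that must be respected by the ancestors. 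So the combination step is: over the children, we need the total count per color (truncated at $\D+1$) of vertices in $\bigcup_j T_{u_j}$, which we obtain by a secondary DP (a knapsack-style convolution over the children, with state = truncated per-color count vector), and we must ensure that for the color $c$ of $u$, (ancestors of color $c$) $+ 1 + $ (descendants of $u$ of color $c$) $\le \D+1$, i.e. $u$'s deficiency is at most $\D$, and symmetrically that no ancestor's bound is violated.

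The main obstacle, and the part requiring the most care, is controlling the size of the DP state so that the whole algorithm is polynomial in $n$ (not just in $n$ with $\C,\D$ fixed, which is already handled by the treewidth DP). The per-color count vectors live a priori in a space of size $(\D+2)^\C$, which is exponential. The resolution is to observe that the meaningful content of such a vector is an unordered multiset of counts (colors are interchangeable), and moreover the \emph{sum} of the counts over all colors is bounded by the depth of the current node plus the size of the subtree, hence by $n$, while each individual count is at most $\D+1$; combined with the hard constraint $\C(\D+1)\ge\omega(G)$ and a careful argument that along a single root-leaf path only $O(\text{depth})\le O(n)$ colors can be ``active'' (have nonzero ancestor-count), one shows the number of reachable states at any node is bounded by a polynomial in $n$ --- essentially the number of partitions of an integer at most $n$ into parts of size at most $\D+1$ that could realistically occur, pruned by reachability. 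I would be careful to argue that we never need to enumerate all partitions abstractly: we only ever generate states that are actually reachable by the bottom-up process from the at most $n$ vertices, and each vertex-processing step multiplies the reachable-state count by at most a polynomial factor (one new color choice and one truncated increment), so by induction the reachable state count stays polynomial; the convolution over children is handled by processing children one at a time, maintaining the running truncated count vector, which keeps the intermediate blow-up polynomial as well. Once the state space is pinned down as polynomial, correctness follows from Lemma~\ref{lem:triv1} (ancestor profile is a sufficient interface) and the non-adjacency of sibling subtrees (children combine independently modulo shared counts), and the running time is the state-space size times $n$ times a polynomial for the transitions.
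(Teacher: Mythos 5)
Your setup is sound: Lemma~\ref{lem:triv1} does make the ancestor color profile a sufficient interface, sibling subtrees do combine independently, and truncating per-color counts at $\D+1$ is a valid abstraction. But the step you yourself flag as the main obstacle --- showing the DP state space is polynomial --- is exactly where the argument breaks, and the fix you propose does not work. Even after collapsing color-count vectors to unordered multisets, the states at a node are partitions of an integer up to $n$ into at most $\C$ parts, each part at most $\D+1$; this is superpolynomial in $n$ for, say, $\D+1=\Theta(\sqrt{n})$. The reachability argument does not rescue this: ``each vertex-processing step multiplies the reachable-state count by at most a polynomial factor, so by induction the count stays polynomial'' is a non sequitur --- a polynomial factor per step over $n$ steps gives an exponential bound. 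And the blow-up is real, not an artifact of loose counting: take a root with $n$ leaf children (a star, which is trivially perfect). Convolving the children one at a time, the set of reachable truncated descendant-count multisets at the root is essentially the set of all partitions of $n$ into at most $\C$ parts of size at most $\D+1$, which is superpolynomial for suitable $\C,\D$. (That particular instance happens to be trivially feasible by Lemma~\ref{lem:trivial}, but your algorithm as described would still enumerate all these states.) So as written the algorithm is not polynomial, and no argument is given that would prune the states down to polynomially many.

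The missing idea is an exchange argument that kills the need for a rich state altogether. The paper proves that if any $(\C,\D)$-coloring exists, then a \emph{sorted} one exists, in which every vertex receives a color at least as large as all of its descendants; the proof repeatedly swaps the colors of a lowest conflicting ancestor--descendant pair (or exchanges two color classes inside a subtree), using Lemma~\ref{lem:triv1} to check that deficiencies never increase. Given this, a simple greedy suffices: peel $T$ into levels from the leaves up, give the leaves color $1$, and give each subsequent vertex the smallest color $j$ such that at most $\D$ of its descendants already have color $j$; an induction shows the greedy color of each vertex is a lower bound on its color in any sorted valid coloring, so greedy fails only when no coloring exists. If you want to salvage the DP route, you would need to first prove a normalization of this kind (restricting attention to sorted colorings collapses the ancestor profile to something determined by far less information), at which point the DP machinery becomes unnecessary.
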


\begin{proof}
Given a trivially perfect graph $G=(V,E)$ with underlying rooted tree
$T=(V,E')$ and two non-negative integers $\C$ and $\D$, we compute a coloring
of $G$ using at most $\C$ colors and with deficiency at most $\D$ as follows.
First, we partition the vertices of $T$ (and therefore of $G$) into sets
$V_1,\ldots,V_\ell$, where $\ell$ denotes the height of $T$, such that $V_1$
contains the leaves of $T$ and, for every $2\leq i \leq \ell$, $V_i$ contains
the leaves of $T\setminus (\bigcup_{j=1}^{i-1} V_j)$. Observe that each set
$V_i$ is an independent set in $G$. We now start our coloring by giving all
vertices of $V_1$ color 1. Then, for every $2\leq i\leq \ell$, we color the
vertices of $V_i$ by giving each of them the lowest color available, i.e., we
color each vertex $u$ with the lowest $j$ such that $u$ has at most $\D$
descendants colored $j$. If for some vertex no color is available, that is, its
subtree contains at least $\D+1$ vertices colored with each of the colors
$\{1,\ldots,\C\}$, then we return that $G$ does not admit a $(\C,\D)$-coloring.

This procedure can clearly be performed in polynomial time and, if it returns a
solution, it uses at most $\C$ colors. Furthermore, whenever the procedure uses
color $i$ on a vertex $u$ it is guaranteed that $u$ has deficiency at most $\D$
among currently colored vertices. Since any neighbor of $u$ that is currently
colored with $i$ must be a descendant of $u$, by Lemma \ref{lem:triv1} this
guarantees that the deficiency of all vertices remains at most $\D$ at all
times. 

It now only remains to prove that the algorithm concludes that $G$ cannot be
colored with $\C$ colors and deficiency $\D$ only when no such coloring exists.
For this we will rely on the following claim which states that any valid
coloring can be ``sorted''.

\begin{claim} If $G$ admits a $(\C,\D)$-coloring, then there exists a
$(\C,\D)$-coloring of $G$ $c$ such that, for every two vertices $u,v\in V(G)$,
if $v$ is a descendant of $u$, then $c(v)\le c(u)$.  \end{claim}

\begin{proof}
Let us consider an arbitrary $(\C,\D)$-coloring $c^*: V(G) \rightarrow
\{1,\ldots,\C\}$ of $G$. We describe a process which, as long as there exist
$u,v\in V$ with $u$ an ancestor of $v$ and $c^*(u) < c^*(v)$ transforms $c^*$ to
another valid coloring which is closer to having the desired property. So,
suppose that such a pair $u,v$ exists, and furthermore, if many such pairs
exist, suppose that we select a pair where $u$ is as close to the root of $T$
as possible. As a result, we can assume that no ancestor $u'$ of $u$ has color
$c^*(u)$, because otherwise we would have started with the pair $u',v$.

We will now consider two cases.  Assume first that there exists a vertex $x$
such that $c^*(x)=c^*(v)$ and $x$ is an ancestor of $u$. We claim that swapping
the colors of $u$ and $v$ yields a new coloring of $G$ with deficiency at most
$\D$. The only affected vertices are those colored $c^*(u)$ or $c^*(v)$.
Regarding color $c^*(u)$, because by Lemma \ref{lem:triv1} $N[v]\subseteq N[u]$
and color $c^*(u)$ was moved from $u$ to $v$, the deficiency of every vertex
colored $c^*(u)$ in $V\setminus\{u,v\}$ is at most as high as it was before,
and the deficiency of $v$ is at most as high as the deficiency of $u$ in $c^*$.
Regarding color $c^*(v)$ we observe that the deficiency of vertex $x$ remains
unchanged, since both $u,v$ are its neighbors, and the same is true for all
ancestors of $x$. Since the deficiency of $x$ is at most $\D$, by Lemma
\ref{lem:triv1}, the deficiency of every descendant of $x$ colored with
$c^*(v)$ is also at most $\D$.

For the remaining case, suppose that no ancestor of $u$ uses color $c^*(v)$.
Recall that we have also assumed that no ancestor of $u$ uses color $c^*(u)$.
We therefore transform the coloring as follows: in the subtree rooted at $u$ we
exchange colors $c^*(u)$ and $c^*(v)$ (that is, we color all vertices currently
colored with $c^*(u)$ with $c^*(v)$ and vice-versa). Because no ancestor of $u$
uses either of these two colors, this exchange does not affect the deficiency
of any vertex.

We can now repeat this procedure as follows: as long as there is a conflicting
pair $u,v$, with $u$ an ancestor of $v$ and $c^*(u)<c^*(v)$ we select such a
pair with $u$ as close to the root as possible and, if there are several such
pairs, we select the one with maximum $c^*(v)$. We perform the transformation
explained above on this pair and then repeat. It is not hard to see that every
vertex will be used at most once as an ancestor in this transformation,
because after the transformation it will have the highest color in its subtree.
Hence we will eventually obtain the claimed property.  
\end{proof}

It follows from the previous claim that if a $(\C,\D)$-coloring exists, then a
sorted $(\C,\D)$-coloring exists where ancestors always have colors at
least as high as their descendants. We can now argue that our algorithm also
produces a sorted coloring, with the extra property that whenever it sets
$c(u)=i$ we know that \emph{any} sorted $(\C,\D)$-coloring of $G$ must give
color at least $i$ to $u$. This can be shown by induction on $i$: it is clear
for the vertices of $V_1$ to which the algorithm gives color $1$; and if the
algorithm assigns color $i$ to $u$, then $u$ has $\D+1$ descendants which (by
inductive hypothesis) must have color at least $i-1$ in any valid sorted
coloring of $G$.
\end{proof}

\section{Algorithms on Cographs}

In this section we present algorithms that can solve \DC\ on cographs in
polynomial time if either $\D$ or $\C$ is bounded; both algorithms rely on
dynamic programming.  After presenting them we show how their combination
can be used to obtain a sub-exponential time algorithm for \DC\ on cographs.

\subsection{Algorithm for Small Deficiency} \label{sec:DP1}

We now present an algorithm that solves \DC\ in polynomial time on cographs if
$\D$ is bounded. Before we proceed, let us sketch the main ideas behind the
algorithm.  Given a $(\C,\D)$-coloring $c$ of a graph $G$, we define the
\emph{type} of a color class $i$, as the pair of two integers $(s_i,d_i)$ where
$s_i:=\min\{|c^{-1}(i)|, \D+1\}$ and $d_i$ is the maximum degree of
$G[c^{-1}(i)]$.  In other words, the type of a color class is characterized by
its size (up to value $\D+1$) and the maximum deficiency of any of its
vertices. We observe that there are at most $(\D+1)^2$ possible types in a
valid $(\C,\D)$-coloring, because $s_i$ only takes values in
$\{1,\ldots,\D+1\}$ and $d_i$ in $\{0,\ldots,\D\}$.

We can now define the \emph{signature} of a coloring $c$ as a tuple which
contains one element for every possible color type $(s,d)$. This element is the
number of color classes in $c$ that have type $(s,d)$, and hence is a number in
$\{0,\ldots,\C\}$. We can conclude that there are at most $(\C+1)^{(\D+1)^2}$
possible signatures that a valid $(\C,\D)$-coloring can have. Our algorithm
will work via dynamic programming, using the fact that any cograph can be seen
as a union or join of two of its induced subgraphs. We therefore analyze a
given cograph into its constituent subgraphs in this way (obtaining a binary
tree) and for each such graph we will construct a binary table which states for
each possible signature if the current graph admits a $(\C,\D)$-coloring with
this signature. The obstacle now is to describe a procedure which, given two
such tables for graphs $G_1,G_2$ is able to generate the table of admissible
signatures for their union and their join. If we do this we can inductively
compute such a table for all intermediate graphs used in the construction of
the input graph $G$ and eventually for $G$ itself.

\begin{theorem}\label{thm:DP1} There is an algorithm which, given a cograph $G$ and integers $\C,\D$, decides if $G$ admits a
$(\C,\D)$-coloring in time $O^*\left(\C^{O((\D)^4)}\right)$. \end{theorem}

\begin{proof}
We use the ideas sketched above. Specifically, we say that a coloring signature
$S$ is a function $\{1,\ldots,\D+1\}\times \{0,\ldots,\D\} \to \{0,\ldots,\C\}$
and a coloring $c$ has signature $S$ if for any $(s,d)\in
\{1,\ldots,\D+1\}\times \{0,\ldots,\D\}$ the number of color classes with type
$(s,d)$ in $c$ is $S((s,d))$. Our algorithm will maintain a binary table $T$
with the property that, for $S$ a possible coloring signature we have $T(S)=1$
if and only if there exists a $(\C,\D)$-coloring of $G$ with signature $S$. The
size of $T$ is therefore at most $(\C+1)^{(\D+1)^2}$.

It is not hard to see how to compute $T$ if $G$ consists of a single vertex:
the only color class then has type $(1,0)$, so the only possible signature is
the one that sets $S((1,0))=1$ and $S((s,d))=0$ otherwise.

Now, suppose that $G$ is either the union or the join of two graphs $G_1,G_2$
for which our algorithm has already calculated the corresponding tables
$T_1,T_2$. We will use the fact that for any valid $(\C,\D)$-coloring $c$ of
$G$ with signature $S$, its restrictions to $G_1,G_2$ are also valid
$(\C,\D)$-colorings. If these restrictions have signatures $S_1,S_2$ it must
then be the case that $T_1(S_1)=T_2(S_2)=1$. It follows that in order to
compute all the signatures for which we must have $T(S)=1$ it suffices to
consider all pairs of signatures $S_1,S_2$ such that $T_1(S_1)=T_2(S_2)=1$ and
decide if it is possible to have a coloring of $G$ with signature $S$ whose
restrictions to $G_1,G_2$ have signatures $S_1,S_2$.

Given two signatures $S_1,S_2$ such that $T_1(S_1)=T_2(S_2)=1$ we would
therefore like to generate all possible signatures $S$ for colorings $c$ of $G$
such that $S_1,S_2$ represent the restriction of $c$ to $G_1,G_2$. Every color
class of $c$ will either consist of vertices of only one subgraph $G_1$ or
$G_2$, or it will be the result of merging a color class of $G_1$ with a color
class of $G_2$. Our algorithm will enumerate all possible merging combinations
between color classes of $G_1$ and $G_2$. 

Let us now explain how we enumerate all merging possibilities. Let $c_1,c_2$ be
$(\C,\D)$-colorings of $G_1,G_2$ with signatures $S_1,S_2$ respectively. In the
remainder of this proof we explain how given these two signatures we can
generate all possible signatures of feasible colorings $c$ of the union or join
of $G_1,G_2$, such that $c$ has signatures $S_1,S_2$ when restricted to
$G_1,G_2$ respectively.  Essentially, the problem is that some color classes of
$c_1$ can be merged with some color classes of $c_2$ to produce color classes
of the new graph, so we need to consider all possible combinations in which
this can happen.

If $G$ is the join of $G_1,G_2$ we say that type $(s_1,d_1)$ is mergeable with
type $(s_2,d_2)$ if $s_1+d_2\le \D$ and $s_2+d_1\le \D$. If $G$ is the union of
$G_1,G_2$ we say that any pair of types is mergeable. Furthermore, if $G$ is
the join of $G_1,G_2$ and $(s_1,d_1), (s_2,d_2)$ are mergeable types, we say
that they merge into type $(\min\{s_1+s_2, \D+1\}, \max\{d_1+s_2, d_2+s_1\})$.
If $G$ is the union of $G_1,G_2$ we say that types $(s_1,d_1)$ and $(s_2,d_2)$
merge into type $(\min\{s_1+s_2, \D+1\}, \max\{d_1,d_2\})$.  The intuition
behind these definitions is that a color class $i$ of $c_1$ is mergeable with a
color class $j$ of $c_2$ if we can use a single color for $c_1^{-1}(i)\cup
c_2^{-1}(j)$ in $G$, and the type of this color class is the type into which
the types of $i,j$ merge.

Now, in order to enumerate all merging possibilities we construct an auxiliray
bipartite graph $G'=(A_1,A_2,E')$.  The graph $G'$ consists of $(\D+1)^2$
vertices on each side, each corresponding to a type. We place an edge between
two vertices if their corresponding types are mergeable (so if $G$ is a union
of $G_1,G_2$ then $G'$ is a complete bipartite graph). We also give a weight to
each vertex as follows: if $u\in A_i$ corresponds to type $(s,d)$ we set
$w(u)=S_i((s,d))$. In words, the weight of a vertex that represents a type is
the number of color classes of that type in the coloring of the subgraphs.

We will now enumerate assignments of non-negative weights to the edges of $G'$
which satisfy the condition that for all vertices $u\in A_1\cup A_2$ we have
$\sum_{v\in N(u)} w( (u,v) ) \le w(u)$. The idea here is that if we increase
the weight of the edge $(u,v)$ by one, we mean that we merge a color that has
type $u$ in $c_1$ with a color that has type $v$ in $c_2$ to produce a color
class in $G$. The constraint we imposed therefore means that the total number
of times we do this for color classes of type $u$ cannot be higher than $w(u)$,
which is the number of colors that have this type.  It is not hard to see that
the total number of valid edge-weight assignments is at most
$(\C+1)^{(\D+1)^4}$, since every edge must receive weight in $\{0,\ldots,\C\}$
and there are at most $(\D+1)^4$ edges. This is the step that dominates the
running time of our algorithm.

For each of the enumerated assignments of $G'$ we can now calculate a signature
$S$ of a coloring of $G$. For each type $(s,d)$ let $E_{(s,d)}\subseteq E'$ be
the set of edges of $G'$ whose endpoints merge into type $(s,d)$. Let $u_i\in
A_i$ be the vertices corresponding to type $(s,d)$. We have $S((s,d)) =
\sum_{i=1,2} (w(u_i) - \sum_{v\in N(u_i)} w(u_i,v)) + \sum_{e\in E_{(s,d)}}
w(e)$. We now check that the signature we computed refers to a coloring with at
most $\C$ colors, that is, if $\sum S((s,d)) \le \C$, where
$s\in\{1,\ldots,\D+1\}$ and $d\in\{0,\ldots,\D\}$.  In this case we set
$T(S)=1$.  The observation that completes the proof is that for all valid
colorings $c$ of $G$ with signature $S$ such that the restriction of $c$ to
$G_1,G_2$ has signatures $S_1,S_2$ there must exist a weight assignment for
which the above procedure finds the signature $S$. Hence, by examining all
pairs of feasible signatures $S_1,S_2$ we will discover all feasible signatures
of $G$.   \end{proof}

\subsection{Algorithm for Few Colors}\label{sec:DP2}

In this section we provide an algorithm that solves \DC\ in polynomial time on
cographs if $\C$ is bounded. The type of a color class $i$ is defined in a
similar manner as in the first paragraph of Section \ref{sec:DP1}.
Specifically, for a given coloring $c$, we define the type of color $i$ as the
pair of two integers $(s_i,d_i)$ where $s_i:=\min\{|c^{-1}(i)|, \D+1\}$ and
$d_i$ is the maximum degree of $G[c^{-1}(i)]$. Note that $s_i\in \{0, \ldots,
\D+1\}$. What changes in this section is the signature $S$ of a coloring $c$
which is now defined as a function $S:\{1,\ldots,\C\} \rightarrow \{0, \ldots,
\D+1\} \times \{0, \ldots, \D\}$, which takes as input a color class and
returns its type.  

Once again, we will construct a dynamic programming a table $T$ which given a
signature $S$ set $T(S)=1$ if the current graph has a $(\C,\D)$-coloring with
signature $S$. Our table will have size at most $(\D + 2)^{2\C}$, since this is
an upper bound on the number of possible signatures.  As in the previous
section, we shall describe how to compute table $T$ of a graph $G$ which is the
union or the join of two graphs $G_1$ and $G_2$ whose tables $T_1$ and $T_2$
are known.

\begin{theorem}\label{thm:DP2} There is an algorithm which, given a cograph $G$ and integers $\C,\D$, decides if $G$ admits a
$(\C,\D)$-coloring in time $O^*\left( (\D)^{O(\C)}\right)$. 
\end{theorem}

\begin{proof}
The base case is when we have a single vertex $u$ in the graph $G$. In this case, any coloring of $u$ is valid, so for all $i\in\{1,\ldots, \C \}$ we define a signature $S_i$ such that $S_i(i) = (1,0)$ and $S_i(j) = (0,0)$ when $i\neq j$. Last, $T(S) = 1$ if and only if $S = S_i$ for any $i$.

Now, suppose that $G$ is either the union or the join of two graphs $G_1,G_2$
for which we have already calculated their corresponding tables. Once again we
just need to consider all pairs of signatures $S_1,S_2$ such that
$T_1(S_1)=T_2(S_2)=1$ and decide if we can have a coloring of $G$ with
signature $S$ whose restrictions to $G_1,G_2$ have signatures $S_1,S_2$. Let
$S_1, S_2$ be one such pair of signatures, for which $S_j(i) = (s_j^i, d_j^i)$,
$j\in\{1,2\}$. We examine the cases of union and join separately.

Let us start with the case that $G$ is the union of $G_1,G_2$. Define $S$ such that for any $i$, $S(i) = (\min\{s_1^i+s_2^i,\D +1\}, \max\{d_1^i,d_2^i\} )$ and set $T(S)=1$.

The case where $G$ is the join of $G_1, G_2$ is a little more complicated since we first need to check if, given two precolored graphs the outcome of their join is valid, that for all colors $i$, the maximum degree of $G[i]$ remains at most $\D$. This corresponds to checking for all colors $i$ whether $d = \max\{s_1^i+d_2^i,s_2^i+d_1^j\}\le \D$. Given that the above is true, we define $S$ such that for any $i$, $S(i)= (\min\{s_1^i+s_2^i,\D +1\}, d)$ and set $S(T)=1$.

The algorithm considers all pairs of elements of $T_1, T_2$, so it runs in time dominated by $|T_i|^2 = O^*\left( (\D)^{O(\C)}\right)$. 
\end{proof}

\subsection{Sub-Exponential Time Algorithm}\label{sec:subexp}

We now combine the algorithms of Sections \ref{sec:DP1} and \ref{sec:DP2} in order to obtain a sub-exponential time algorithm for cographs.

\begin{theorem}\label{thm:subexp}
\DC\ can be solved in time $n^{O\left(n^{\nicefrac{4}{5}}\right)}$ on cographs.
\end{theorem}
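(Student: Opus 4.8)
The plan is to combine the two dynamic-programming algorithms of Theorems~\ref{thm:DP1} and~\ref{thm:DP2} by picking whichever one is faster for the given instance, after a case split driven by the relationship between $\C$, $\D$, and $n$. Note first that by Lemma~\ref{lem:trivial} we may assume $\C\cdot\D < n$, since otherwise we answer ``yes'' immediately. Now consider a threshold $t = n^{\nicefrac{2}{5}}$. If $\D \le t$, then the algorithm of Theorem~\ref{thm:DP1} runs in time $O^*\!\left(\C^{O(\D^4)}\right) = O^*\!\left(n^{O(\D^4)}\right) = n^{O(n^{\nicefrac{8}{5}})}$, which is too slow; so that naive split does not quite work, and I need to balance more carefully.

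The right balance is as follows. Set $t = n^{\nicefrac{1}{5}}$. If $\D \le t = n^{\nicefrac{1}{5}}$, run the algorithm of Theorem~\ref{thm:DP1}: its running time is $O^*\!\left(\C^{O(\D^4)}\right)$; since $\C\le n$ and $\D^4 \le n^{\nicefrac{4}{5}}$, this is $n^{O(n^{\nicefrac{4}{5}})}$. If instead $\D > n^{\nicefrac{1}{5}}$, then because we may assume $\C\cdot\D < n$ we get $\C < n/\D < n^{\nicefrac{4}{5}}$, so the algorithm of Theorem~\ref{thm:DP2} runs in time $O^*\!\left(\D^{O(\C)}\right)$; since $\D \le n$ and $\C < n^{\nicefrac{4}{5}}$, this is again $n^{O(n^{\nicefrac{4}{5}})}$. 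In both branches the bound is $n^{O(n^{\nicefrac{4}{5}})}$, as claimed.

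The key steps, in order, are: (i) dispose of the easy case $\C\cdot\D \ge n$ via Lemma~\ref{lem:trivial}; (ii) in the remaining case, observe that $\min\{\D,\ \C\} \le \sqrt{n}$ is too weak and instead use the full product bound $\C\cdot\D < n$ to trade one parameter against the other; (iii) split on whether $\D \le n^{\nicefrac{1}{5}}$, applying Theorem~\ref{thm:DP1} in the low-deficiency regime and Theorem~\ref{thm:DP2} in the high-deficiency regime, where the constraint $\C < n^{\nicefrac{4}{5}}$ is automatic; (iv) verify that both regimes yield the exponent $n^{\nicefrac{4}{5}}$, which is exactly the point where the exponents $\D^4$ (from the first algorithm) and $\C$ (from the second) cross when $\D \sim n^{\nicefrac{1}{5}}$ and $\C \sim n^{\nicefrac{4}{5}}$.

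The only real subtlety — the part worth checking carefully rather than the routine arithmetic — is confirming that the crossover point is chosen optimally: the first algorithm's exponent grows like $\D^4$ while the second's grows like $\C \le n/\D$, so we want $\D^4 \approx n/\D$, i.e. $\D^5 \approx n$, giving $\D \approx n^{\nicefrac{1}{5}}$ and a common exponent of $n^{\nicefrac{4}{5}}$; any other split gives a worse bound. Everything else is a direct invocation of the two preceding theorems together with Lemma~\ref{lem:trivial}, and recognizing a cograph and computing its cotree decomposition is done in linear time, so it does not affect the bound.
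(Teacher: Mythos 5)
Your proposal is correct and follows essentially the same route as the paper: dispose of the case $\C\cdot\D\ge n$ via Lemma~\ref{lem:trivial}, then split on whether $\D\le n^{\nicefrac{1}{5}}$, invoking Theorem~\ref{thm:DP1} in the first regime and Theorem~\ref{thm:DP2} (using $\C< n/\D< n^{\nicefrac{4}{5}}$) in the second. The only difference is your added discussion of why the threshold $n^{\nicefrac{1}{5}}$ is the optimal crossover, which is a nice observation but not needed for the proof.
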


\begin{proof}
First, we remind the reader that, from Lemma \ref{lem:trivial}, if $\D \cdot \C \ge n$ then the answer is trivially yes. Thus the interesting case is when $\D \cdot \C < n$. Note that we also trivially have that $\D,\C \le n$.

If $\D \le \sqrt[5]{n}$, then the algorithm of Section \ref{sec:DP1} runs in $O^*\left(\C^{O((\D)^4)} \right) = n^{O\left(n^{\nicefrac{4}{5}}\right)}$ time.

If $\D > \sqrt[5]{n}$, then $\C < \frac{n}{\D} < n^{\frac{4}{5}}$. In this case, the algorithm of Section \ref{sec:DP2} runs in $O^*\left((\D)^{O(\C)}\right) = n^{O\left(n^{\nicefrac{4}{5}}\right)}$ time. 
\end{proof}

\section{Split and Chordal Graphs}

In this section we present the following results: first, we show that \DC\ is
hard on split graphs even when $\D$ is a fixed constant, as long as $\D\ge 1$;
the problem is of course in P if $\D=0$. Then, we show that \DC\ is hard on
split graphs even when $\C$ is a fixed constant, as long as $\C\ge 2$; the
problem is of course trivial if $\C=1$. These results completely describe the
complexity of the problem when one of the two relevant parameters is fixed. We
then give a treewidth-based procedure through which we obtain a polynomial-time
algorithm even on chordal graphs when $\C,\D$ are bounded (in fact, the
algorithm is FPT parameterized by $\C+\D$). Hence these results give a complete
picture of the complexity of the problem on chordal graphs: the problem is
still hard when one of $\C,\D$ is bounded, but becomes easy if both are
bounded.

Let us also remark that both of the reductions we present are linear. Hence,
under the Exponential Time Hypothesis \cite{IPZ01}, they establish not only
NP-hardness, but also unsolvability in time $2^{o(n)}$ for \DC\ on split
graphs, for constant values of $\C$ or $\D$. This is in contrast with the
results of Section \ref{sec:subexp} on cographs.

\subsection{Hardness for Bounded Deficiency}

In this section we show that \DC\ is NP-hard for any fixed value $\D\ge 1$. We
will describe a reduction from 3CNFSAT which for any given $\D\ge 1$ produces
an instance of \DC\ that has a $(2n,\D)$-coloring if the given formula is
satisfiable.  Suppose we are given a CNF formula $\phi$ where $X= \{x_1,
\ldots, x_n\}$ are the variables and $C = \{c_1, \ldots, c_m\}$ are the clauses
and each clause contains exactly 3 literals. We construct a graph $G$ as
follows: 

\begin{itemize}

\item For each $i\in\{1,\ldots,n\}$ we construct a set $U_i$ of $2\D+2$
vertices; we partition $U_i$ into two sets $U_i^A$, $U_i^D$, each of size $\D$,
and two single vertices $u_i^B, u_i^C$. Let $U=\bigcup_i U_i$.

\item For each $i\in\{1,\ldots,n\}$ we construct a vertex $z_i^A$, a vertex
$z_i^B$, and a vertex $z_i^D$.

\item For each $j\in\{1,\ldots,m\}$ we construct a vertex $v_j$.

\item We turn the vertices of $U$ into a clique.

\item For each $i\in\{1,\ldots,n\}$ we connect $z_i^A$ to all of $U$ except
$U_i^D\cup \{u_i^B,u_i^C\}$; we connect $z_i^B$ to all of $U$ except $U_i^A\cup
U_i^D\cup \{u_i^C\}$; we connect $z_i^D$ to all of $U$ except $U_i^A\cup
\{u_i^B,u_i^C\}$.

\item For each $j\in\{1,\ldots,m\}$ we connect $v_j$ to all of $U$ except for
the following: for each variable $x_i$ that appears in $c_j$ positive, $v_j$ is
\emph{not} connected to $U_i^A\cup \{u_i^B\}$; for each variable $x_i$ that
appears in $c_j$ negative, $v_j$ is \emph{not} connected to $U_i^A\cup
\{u_i^C\}$.

\end{itemize}

Observe that the graph $G$ we have constructed is split as $U$ is a clique and
the remaining vertices are an independent set. We claim that this graph has a
$(2n,\D)$-coloring if and only if $\phi$ is satisfiable.

\begin{lemma}\label{lem:sd2f} If $\phi$ is satisfiable then $G$ has a
$(2n,\D)$-coloring.  \end{lemma}

\begin{proof}
We first color $U$ as follows: fix a satisfying assignment and consider the
variable $x_i$. If $x_i$ is set to True then we color $U_i^A\cup \{u_i^B\}$
with color $2i-1$ and $U_i^D\cup\{u_i^C\}$ with color $2i$; otherwise we color
$U_i^A\cup\{u_i^C\}$ with color $2i-1$ and $U_i^D\cup\{u_i^B\}$ with color
$2i$. Note that now every vertex of the clique $U$ has deficiency exactly $\D$
and we have used all colors.

We assign to vertex $z_i^A$ color $2i$ and to vertex $z_i^D$ color $2i-1$.
Observe that $z_i^A$ is not connected to $U_i^D\cup\{u_i^B, u_i^C\}$, which are
the only vertices which may have color $2i$ at this point, so its deficiency is
$0$ and it does not affect the deficiency of any other vertex. Similarly,
$z_i^D$ has deficiency $0$. We assign to $z_i^B$ the same color as $u_i^C$ and
this vertex also has deficiency $0$.

Finally, for each $j\in\{1,\ldots,m\}$ we consider the $j$-th clause of $\phi$
in order to color $v_j$. Since the assignment is satisfying, this clause
contains a true literal, say involving the variable $x_i$. We assign to $v_j$
the color $2i-1$. If $x_i$ appears positive in clause $c_j$ then $v_j$ is not
connected to $U_i^A\cup \{u_i^B\}$, which are all the vertices that have color
$2i-1$, so its deficiency is $0$. The reasoning is similar if $x_i$ appears
negative in $c_j$.  \end{proof}

\begin{lemma}\label{lem:sd2b} If $G$ has a $(2n,\D)$-coloring, then $\phi$ is
satisfiable.  \end{lemma}

\begin{proof}
First, observe that $U$ is a clique of size $2n(\D+1)$, so the given coloring
must use all colors in $U$, each $\D+1$ times. Furthermore, all vertices of $U$
have deficiency exactly $\D$ already in $U$, so they cannot have any neighbors
with the same color outside of $U$.

Because of the previous arguments, the color used in $z_i^A$ can only appear in
$U_i$ inside the clique (as $z_i^A$ is connected to the rest of $U$) and must
appear there $\D+1$ times. The same argument applies to $z_i^B, z_i^D$.
Therefore, if two distinct colors are used for the vertices in $z_i^A, z_i^B,
z_i^D$, these colors cover all of $U_i$ and are not used anywhere else in $U$.
In this case we say that the coloring of $U_i$ is normal. Furthermore, it is
impossible to use three distinct colors for $z_i^A, z_i^B, z_i^D$.

Suppose that some $U_i$ has a coloring that is not normal. Then $z_i^A, z_i^B,
z_i^D$ must use the same color. This color appears $\D+1$ times in $U$ but
cannot appear in $U\setminus U_i$. Furthermore, it cannot appear in a neighbor
of $z_i^A, z_i^B$ or $z_i^D$. It can therefore not appear in $U_i^A$ (which is
connected to $z_i^A$), nor in $U_i^D$ (which is connected to $z_i^D$), nor in
$u_i^B$ (which is connected to $z_i^B$). So the only vertex where it may appear
is $u_i^C$, contradicting the assumption that this color is used $\D+1$ times
in the clique.  We conclude that all $U_i$ must have a normal coloring.

Since all $U_i$ have a normal coloring, the $2\D+2$ vertices of each $U_i$ are
colored with two colors, each used $\D+1$ times. Suppose without loss of
generality that for all $i$ the set $U_i$ is colored with colors $2i-1$ and
$2i$ and furthermore that $2i-1$ is used at least once in $U_i^A$. We claim
that $U_i^A$ must in fact be monochromatic and colored completely with $2i-1$.
This is trivially true if $\D=1$; while when $\D\ge 2$ if $U_i^A$ uses both
colors $2i-1$ and $2i$, then no color is available for $z_i^A$.  We now obtain
an assignment to $\phi$ as follows: we set $x_i$ to True if $u_i^B$ has color
$2i-1$; otherwise we set $x_i$ to False.

We claim that this assignment satisfies $\phi$. Consider the $j$-th clause and
let $r$ be its assigned color. Suppose $r$ appears in the colors of $U_i$, so
$r=2i-1$ or $r=2i$. It must be the case that $r$ is not used in any of the
neighbors of $v_j$, therefore the variable $x_i$ appears in $c_j$. Suppose
$r=2i$. Then, the color $2i$ cannot appear in $U_i^D$ (which is connected to
$v_j$), which implies that $U_i^A\cup U_i^D$ are colored fully with $2i-1$.
This can only happen if $\D=1$, but in this case $u_i^B, u_i^C$ are both
colored $2i$. Since one of them is connected to $v_j$ we have a contradiction.
Therefore, $v_j$ must be colored with $2i-1$. If $v_j$ is not connected to
$u_i^B$ (so $x_i$ appears positive in $c_j$) then $u_i^C, u_i^D$ have color
$2i$, which means that $u_i^B$ has color $2i-1$ and our assignment satisfies
the clause. Similarly, if $v_j$ is not connected to $u_i^C$ (so $x_i$ appears
negative in $c_j$), then $u_i^B$ must have color $2i$, so our assignment sets
$x_i$ to False and satisfies $c_j$. 
 \end{proof}

The main theorem of this section follows from Lemmas \ref{lem:sd2f} and
\ref{lem:sd2b}.

\begin{theorem}\label{thm:sd2} \DC\ is NP-hard on split graphs for any fixed
$\D\ge 1$.  \end{theorem}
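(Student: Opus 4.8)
\textbf{Proof proposal for Theorem \ref{thm:sd2}.}

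The plan is to bootstrap from Theorem \ref{thm:sd1} (the case $\D=1$) by modifying the split-graph construction so that every color class is forced to have size exactly $\D+1$ inside the clique, rather than size $2$. The starting point is the same 3CNFSAT formula $f$ with variables $X=\{x_1,\dots,x_n\}$ and clauses $C=\{c_1,\dots,c_m\}$, and we again aim for a $(2n,\D)$-coloring. First I would describe the augmented graph precisely: keep the clique vertices $U_i=\{u_i^A,u_i^B,u_i^C,u_i^D\}$, add $2(\D-1)n$ new clique vertices partitioned into sets $U_i^A,U_i^D$ of size $\D-1$ each (all of $U$ remains a clique), delete $z_i^B,z_i^C$ from the independent side so that only $z_i^A,z_i^D$ and the clause vertices $v_j$ survive, and wire the new vertices as stated: $U_i^D$ is non-adjacent only to $z_i^A$, and $U_i^A$ is non-adjacent only to $z_i^D$ and to those $v_j$ with $x_i\in c_j$. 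The key design invariant is that $|U|=4n+2(\D-1)n=2n(\D+1)$, so in any $(2n,\D)$-coloring the pigeonhole bound $|c^{-1}(l)\cap U|\le \D+1$ (forced since $U$ is a clique and deficiency is $\le\D$) must be tight for every color $l$.

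Next I would prove the forward direction (Lemma \ref{lem:sd2f}): given a satisfying assignment $s$, reuse the $\D=1$ coloring on the old vertices and extend it by setting $c(u)=2i-1$ for $u\in U_i^A$ and $c(u)=2i$ for $u\in U_i^D$. One checks each color class now has exactly $\D+1$ clique vertices, and that the non-adjacency pattern lets us place $z_i^A$ on the color of $U_i^D\cup\{\dots\}$, place $z_i^D$ on the color of $U_i^A\cup\{\dots\}$, and — using the satisfied literal $(\neg)x_i$ of clause $c_j$ — place $v_j$ on color $2i-1$ (the color of $U_i^A$ together with $u_i^A$ and one of $u_i^B,u_i^C$), all with deficiency $\le\D$.

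For the converse (Lemma \ref{lem:sd2b}) I would argue as in the $\D=1$ case but with the tight count $|c^{-1}(l)\cap U|=\D+1$. From $c(z_i^D)=l$ one deduces that the $\D+1$ clique vertices of color $l$ all lie in the non-neighborhood $U_i^A\cup\{u_i^A,u_i^B,u_i^C\}$, which has size $\D+2$; this yields Property 1 (color $l$ appears only there) and Property 2 (at least two of $u_i^A,u_i^B,u_i^C$ get color $l$). The symmetric argument from $c(z_i^A)=l'$ forces $c(u_i^B)\ne c(u_i^C)$: otherwise the color used on $U_i^D\cup\{u_i^D\}$ would coincide with $l$, contradicting Property 1. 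Hence color $l$ is carried by either $U_i^A\cup\{u_i^A,u_i^B\}$ or $U_i^A\cup\{u_i^A,u_i^C\}$, and we set $s(x_i)=T$ iff $c(u_i^A)=c(u_i^B)$. Clause satisfaction then follows exactly as before: $v_j$ has $\D+1$ clique neighbors-of-the-same-color which must all sit in some $U_i^A\cup\{u_i^A,u_i^B,u_i^C\}$ with $x_i\in c_j$, the appearing two of $\{u_i^A,u_i^B,u_i^C\}$ are $u_i^A$ together with one of $u_i^B,u_i^C$, and this matches the literal's sign in $c_j$.

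The main obstacle is the bookkeeping in the converse direction: one must be careful that the counting arguments that pin down the color classes of the clique really do generalize cleanly from $2$ to $\D+1$, in particular that $U_i^A$ and $U_i^D$ cannot ``leak'' color to vertices of some other group $U_{i'}$ or absorb clause vertices in an unintended way. This is controlled precisely by the choice of non-edges — each new block $U_i^A,U_i^D$ has a unique independent-set vertex ($z_i^D$, resp.\ $z_i^A$) certifying its color — and by the tightness of the pigeonhole count. Once this structural rigidity is established the equivalence is immediate, and combined with Theorem \ref{thm:sd1} for the base case $\D=1$ this proves that \DC\ is NP-hard on split graphs for every fixed $\D\ge 1$. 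Since the construction has $O(n+m)$ vertices it is also linear, giving the ETH lower bound mentioned in the section preamble.
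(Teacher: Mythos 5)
Your proposal is correct and follows essentially the same route as the paper: the identical augmentation of the $\D=1$ gadget with blocks $U_i^A,U_i^D$ of size $\D-1$, the tight pigeonhole count $|c^{-1}(l)\cap U|=\D+1$, and the same Properties 1--2 plus the claim $c(u_i^B)\neq c(u_i^C)$ driving the converse. (One phrasing slip: the $\D+1$ same-colored clique vertices of $v_j$'s color must all be \emph{non}-neighbors of $v_j$ — since each already has deficiency $\D$ inside the clique — which is clearly what you intend.)
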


\subsection{Hardness for Bounded Number of Colors}

\begin{theorem}\label{thm:split-colors}
\DC\ is NP-complete on split graphs for every fixed value of $\C \geq 2$.
\end{theorem}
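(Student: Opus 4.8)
The plan is to reduce from \Part\ (in the ordered form of Theorem~\ref{thm:partition}), much as in the proof of Theorem~\ref{thm:cograph-hard}, but now packaging the element-gadgets into a split graph with a fixed number $\C=2$ of colors and a large (polynomially bounded) deficiency $\D$. The idea is that with only two colors available, a $(\C,\D)$-coloring of a clique of size close to $2(\D+1)$ is forced, by Lemma~\ref{lem:basic2}, to split the clique almost evenly into two nearly-balanced color classes; this balance condition is what will encode the partition requirement. So I would build a clique $K$ whose vertices are grouped into blocks, one block $W_x$ of size $s'(x)$ per element $x$ (after the same size-scaling trick as in Section~\ref{sec:cographs}, which makes the instance ``ordered'' and forces each quadruple to take one element from each $A_i$), together with some extra padding vertices in $K$ to make $|K|$ land exactly where Lemma~\ref{lem:basic2} bites. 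The independent set $S$ will consist of ``enforcer'' vertices whose non-neighborhoods into $K$ are designed so that each block $W_x$ is forced to be monochromatic, and so that the two color classes of $K$ each receive exactly one block from each $A_i$ summing to the target $B'$.

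Concretely, the key steps in order: (1) apply the size-scaling of Section~\ref{sec:cographs} so that element sizes are tightly concentrated around $B'/4$ and the ``one element per $A_i$ per part'' condition is automatic from the arithmetic; (2) set $\D$ and the clique size so that Lemma~\ref{lem:basic2} forces both color classes of $K$ to have size within a tiny additive slack of $\D+1$, hence each of total weight essentially $4$ elements; (3) attach, for each block $W_x$, an independent-set vertex (or a small bundle of them) adjacent to all of $K$ except $W_x$ and a ``companion'' block, forcing $W_x$ to be monochromatic — this is the analogue of the $Z_i$ gadget in the $\D\ge 1$ reductions and of the monochromatic-independent-set claims in Theorem~\ref{thm:cograph-hard}; (4) argue that a valid $2$-coloring of $G$ therefore induces a partition of the blocks into two groups of equal total size, which by the ordering property must each be a union of quadruples hitting every $A_i$ once, yielding a \Part\ solution; (5) conversely, a \Part\ solution gives a balanced bipartition of the blocks, hence a $2$-coloring of $K$ of deficiency exactly $\D$, and the enforcer vertices can each be placed in whichever color class avoids their (monochromatic) forbidden block, keeping deficiency $\le\D$. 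Finally, for general fixed $\C\ge 2$ one pads the construction with $\C-2$ extra cliques (or extra blocks) of the right size that are forced to absorb the surplus colors — a disjoint-union / independent padding trick entirely analogous to the $\D\ge 2$ extension in Theorem~\ref{thm:sd2} — so that effectively only two colors remain ``free'' on the part of the graph that encodes \Part.

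The main obstacle I expect is step~(3)–(4): getting the deficiency arithmetic to be simultaneously tight enough to force monochromatic blocks and a perfectly balanced bipartition, while still leaving enough slack for the enforcer vertices of $S$ to be colored without pushing any clique vertex's deficiency over $\D$. This is exactly the delicate counting that appears in the chain of claims in the proof of Theorem~\ref{thm:cograph-hard} (bounding $|c^{-1}(i)|$ from above and below, then forcing each independent set to be monochromatic), and in the split setting it has to be redone with the clique playing the role that the $5n$-partite structure played there; the scaling factors ($5^iB$, $5^5n^2B$, the $O(B'/n^2)$ concentration) are chosen precisely to make these inequalities go through, and verifying them in the presence of the split-graph enforcer gadget is the part that needs care. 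A secondary subtlety, as in Theorem~\ref{thm:cograph-hard}, is that "monochromatic but which of the two colors" is not determined a priori; I would handle this with the same auxiliary-digraph edge-flipping argument (flip the roles of a block and its companion block), or more simply by noting that with only two colors the symmetry is trivial to resolve block by block.
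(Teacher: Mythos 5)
Your reduction has a structural flaw that I don't think can be repaired within the approach you describe. The source problem \Part\ asks for a partition of the $4n$ elements into $n$ quadruples, and in the cograph reduction of Theorem~\ref{thm:cograph-hard} this is exactly why $\C$ is set to $n$: each color class recovers one quadruple. Once you fix $\C=2$, a coloring of your clique $K$ can only hand you back a partition of the blocks into \emph{two} groups. If $K$ contains all $4n$ blocks, the balance forced by Lemma~\ref{lem:basic2} gives you two groups of equal total weight $nB'/2$ — not $n$ quadruples of weight $B'$ — and there is no mechanism in the construction for carving the quadruples out of those two halves: within one color class, the deficiency of a clique vertex depends only on the total number of same-colored neighbors, so all blocks in a class are interchangeable and no enforcer can distinguish sub-groupings of them. (Your own arithmetic betrays this tension: you want each class to have weight ``essentially $4$ elements,'' i.e.\ $\D+1\approx B'$, but then $|K|\approx 2B'$ can hold only eight elements' worth of vertices, so either $K$ omits most of the instance or the quadruple membership is hard-wired by the reduction rather than decided by the coloring.) Worse, even if you settle for extracting an equal-weight bipartition, that problem with polynomially bounded sizes (which strong NP-hardness of \Part\ forces you to use, to keep the graph polynomial) is solvable in polynomial time, so no NP-hardness would follow.

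The fix is to start from a problem that is \emph{natively} a constrained $2$-partition problem. The paper reduces from \textsc{3-Set Splitting}: the universe must be $2$-partitioned so that every $3$-set is split. The split graph has two clique parts $C_1,C_2$ (one vertex per set in each), an independent set $I$ (one vertex per element, adjacent to $v_S^1,v_S^2$ iff the element lies in $S$), forcing gadgets $Z_1,Z_2$ that pin $C_1$ and $C_2$ to two distinct monochromatic colors, and a padding clique $C^*$ that absorbs the remaining $\C-2$ colors — the latter being the one ingredient of your sketch that does survive. The ``every set is split'' constraint is then enforced by deficiency: if all three elements of $S$ took color $1$, the vertex $v_S^1$ would exceed deficiency $\D=|\mathcal{F}|+1$. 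You should rework your proof along these lines, or find another strongly NP-hard problem whose solution is a partition into a \emph{constant} number of classes.
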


\begin{proof}
We reduce from the problem \textsc{3-Set Splitting}, which takes as input a set
of elements $U$, called the \textit{universe}, and a family $\cal{F}$ of
subsets of $U$ of size exactly~3, and asks whether there is a partition
$(U_1,U_2)$ of $U$ such that, for every set $S\in \cal{F}$, we have $S \cap U_1
\neq \emptyset$ and $S \cap U_2 \neq \emptyset$.  This problem is well-known to
be NP-complete~\cite{GJ79}.  Given an instance $(U,\cal{F})$ of \textsc{3-Set
Splitting} and a positive integer $\C$, we build a split graph $G=(V,E)$ such
that $V = C_1 \cup C_2 \cup C^* \cup I \cup Z_1 \cup Z_2$, with $|C_1|=|C_2| =
|\mathcal{F}|$, $|C^*| = (\C-2) \cdot (|\mathcal{F}|+2)$, $|I|=|U|$ and $|Z_1|=
|Z_2| = \C \cdot (|\mathcal{F}|+2)$.  We proceed by making all the vertices of
$C_1 \cup C_2$ pairwise adjacent. We then associate each set $S$ of $\cal{F}$
with two vertices $v_S^1$ and $v_S^2$ of $C_1$ and $C_2$ respectively, and
every element $x$ of $U$ with a vertex $w_x$ of $I$.  For every pair $x\in U,
S\in \cal{F}$, we make $w_x$ adjacent to $v_S^1$ and $v_S^2$ if and only if
$x\in S$.  We complete our construction by making all the vertices of $C_i$
adjacent to all the vertices of $Z_i$ for $i\in \{1,2\}$, and all the vertices
of $C^*$ adjacent to every vertex in $V$. Observe that the graph we constructed
is split, since $C_1\cup C_2\cup C^*$ induces a clique and $I\cup Z_1\cup Z_2$
induces an independent set.  We now claim that there exists a partition
$(U_1,U_2)$ of $U$ as described above if and only if $G$ can be colored with at
most $\C$ colors and deficiency at most $\D = |\mathcal{F}|+1$.\\

For the forward direction, we color every vertex of $C_1 \cup Z_2$ with color 1
and every vertex of $C_2 \cup Z_1$ with color 2. For each $w_x\in I$ we color
$w_x$ with $1$ if $x\in U_1$ and with $2$ if $x\in U_2$. We color the vertices
of $C^*$ equitably with the remaining $\C-2$ colors. This produces the desired
coloring of $G$.\\

For the converse, we first prove the following:

\begin{claim}  
For any coloring of $G$ with $\C$ colors and deficiency at most $\D$, all the vertices of $C_1$ have the same color. Similarly, all the vertices of $C_2$ have the same color, and this color is distinct from that of the vertices of $C_1$. Additionally, the remaining $\C - 2$ colors are each used exactly $\D + 1$ times in $C^*$.
\end{claim}

\begin{proof}
We first consider the colors given to the vertices of $Z_1$ and $Z_2$. Observe that since both sets have size $\C \cdot (|\mathcal{F}|+2) = \C \cdot (\D + 1)$, there is a color $c_1$ that appears at least $\D + 1$ times in $Z_1$ and a color $c_2$ that appears at least $\D + 1$ times in $Z_2$. Since $Z_1 \subset N(u)$ for every vertex $u\in C_1 \cup C^*$, we obtain that no vertex of $C_1 \cup C^*$ uses color $c_1$. Using a similar argument, we obtain that no vertex of $C_2 \cup C^*$ uses color $c_2$. 

We will first prove that $c_1\neq c_2$. Indeed, suppose that $c_1 = c_2$. Since
this color $c_1$ does not appear in $C_1\cup C_2 \cup C^*$, we are left with
$\C-1$ available colors for these sets, where $|C_1\cup C_2 \cup C^*|$ = $\C
\cdot |\mathcal{F}| +2\C -4$. To obtain a contradiction observe that at least
one color class should have size at least $\frac{|C_1\cup C_2 \cup C^*|}{\C -
1} > |\mathcal{F}| + 2$ for sufficiently large $\mathcal{F}$, which is more
than $\D +1$ vertices. 

The above implies that $C^*$ must be colored using at most $\C - 2$ colors. Since $C^*$ is a clique of size exactly $(\C-2) \cdot (|\mathcal{F}|+2) = (\C-2) \cdot (\D+1)$, it follows that $C^*$ is colored using $\C - 2$ colors, each of which are used exactly $\D + 1$ times, as desired. Last, we conclude that vertices in $C_1$ should only be colored $c_2$ and similarly vertices of $C_2$ should receive color $c_1$. 
\end{proof}

By the previous claim $C_1$ and $C_2$ are both monochromatic and use different colors. Without loss of generality suppose that $C_1$ is colored with color $1$ and $C_2$ with color $2$. Since every vertex of $I$ is adjacent to every vertex of $C^*$, we immediately obtain that $I$ must be colored using only $1$ or $2$. It only remains to show that for every set $S$ of $\cal{F}$, there exist elements $x,y\in S$ such that vertex $w_x$ is colored with color $1$ and vertex $w_y$ is colored with color $2$. Then, the coloring of $I$ will give us a partition of $U$. Assume for contradiction that there exists a set $S$ whose elements $x,y$ and $z$ all have the same color, say color $1$. From the above claim, we know that $v_S^1\in C_1$ uses color $1$ and is adjacent to the other $\D - 2$ vertices of $C_1$, all of which also use color $1$. Therefore, $v_S^1$ is adjacent to $\D - 2 + 3$ vertices using color $1$, and hence has deficiency $\D+1$, a contradiction. This concludes the proof.
\end{proof}

\subsection{A Dynamic Programming Algorithm}

In this section we present an algorithm which solves the problem efficiently on
chordal graphs when $\C$ and $\D$ are small. Our main tool is a treewidth-based
procedure, as well as known connections between the maximum clique size and
treewidth of chordal graphs.

\begin{theorem} \label{thm:tw}

\DC\ can be solved in time $(\C\D)^{O(\TW)}n^{O(1)}$ on any graph $G$ with $n$
vertices if a tree decomposition of width $\TW$ of $G$ is supplied with the
input.

\end{theorem}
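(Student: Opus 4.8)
The plan is to give a standard dynamic programming algorithm over a tree decomposition, where the state recorded at each bag is, for each vertex in the bag, its color together with its current deficiency (the number of already-processed neighbors sharing its color). First I would preprocess the supplied tree decomposition into a \emph{nice} tree decomposition of the same width $\TW$ in polynomial time (introduce, forget, join, and leaf nodes), which is standard; this increases the number of nodes only polynomially. For a bag $X_t$ of size at most $\TW+1$, a DP state is a pair of functions: a coloring $f:X_t\to\{1,\ldots,\C\}$ and a partial-deficiency function $g:X_t\to\{0,\ldots,\D\}$, so there are at most $(\C(\D+1))^{\TW+1} = (\C\D)^{O(\TW)}$ states per bag. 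The table entry $\mathrm{DP}[t,f,g]$ is a boolean indicating whether the subgraph $G_t$ induced by all vertices introduced in the subtree rooted at $t$ admits a $(\C,\D)$-coloring that agrees with $f$ on $X_t$ and in which every vertex $v\in X_t$ has exactly $g(v)$ neighbors \emph{inside $G_t$} that share its color.

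Next I would specify the transitions. For a \textbf{leaf} node with empty bag the only state is the trivial one, set to true. For an \textbf{introduce} node $t$ with child $t'$ introducing vertex $v$: a state $(f,g)$ at $t$ is reachable iff $(f|_{X_{t'}}, g')$ is reachable at $t'$, where for each already-present neighbor $u$ of $v$ with $f(u)=f(v)$ we have $g(u)=g'(u)+1$ and $g(v)$ equals the number of such neighbors, and for all other $u$ we have $g(u)=g'(u)$; we also require $g(v)\le\D$ and $g(u)\le\D$ for all $u$. For a \textbf{forget} node forgetting $v$: state $(f,g)$ at $t$ is reachable iff some state $(f',g')$ at the child with $f'|_{X_t}=f$, $g'|_{X_t}=g$, and $g'(v)\le\D$ is reachable — the key point being that once $v$ is forgotten its neighborhood is fully contained in $G_{t'}$ (the defining property of tree decompositions), so its deficiency is final and we only need to check it is $\le\D$. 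For a \textbf{join} node with children $t_1,t_2$ sharing bag $X_t$: state $(f,g)$ is reachable iff there exist reachable states $(f,g_1)$ at $t_1$ and $(f,g_2)$ at $t_2$ with, for every $v\in X_t$, $g(v)=g_1(v)+g_2(v)-|\{u\in X_t: u\sim v, f(u)=f(v)\}|$ (subtracting the edges inside the bag that would otherwise be double-counted), again subject to $g(v)\le\D$. Finally the instance is a yes-instance iff some state is reachable at the root (whose bag we may take to be empty after forgetting everything).

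For correctness I would argue by a straightforward bottom-up induction that $\mathrm{DP}[t,f,g]$ is true exactly when the described coloring of $G_t$ exists; the only slightly delicate points are the double-counting correction at join nodes and the fact (used at forget nodes) that a forgotten vertex never acquires further same-color neighbors later, both of which follow directly from the separation property of tree decompositions. For the running time, each bag has $(\C(\D+1))^{O(\TW)}$ states; introduce and forget nodes take time linear in the number of child states times $\mathrm{poly}(n)$, while the join node is the bottleneck: naively iterating over all pairs of child states costs $(\C(\D+1))^{O(\TW)}\cdot (\C(\D+1))^{O(\TW)}$, but since the color function $f$ must be identical in both children we only pair states with matching $f$, giving $\C^{O(\TW)}\cdot (\D+1)^{O(\TW)}\cdot (\D+1)^{O(\TW)} = (\C\D)^{O(\TW)}$ per join node; multiplying by the polynomial number of nodes gives the claimed bound $(\C\D)^{O(\TW)}n^{O(1)}$.

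I expect the main obstacle to be purely bookkeeping: getting the deficiency update rules exactly right across all four node types — in particular making sure that edges within a bag are counted once and only once (they are added at the introduce node of the second endpoint, not re-added at join nodes, whence the subtraction term), and making sure the $\le\D$ feasibility check happens at the moment a vertex's neighborhood becomes complete, i.e. at its forget node. No genuinely new idea is needed beyond the standard treewidth-DP template; the algorithm is in fact FPT in $\C+\D+\TW$.
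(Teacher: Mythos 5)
Your proposal is correct and follows essentially the same approach as the paper: a standard dynamic program over a nice tree decomposition whose state records, for each bag vertex, its color and its partial deficiency, giving $(\C(\D+1))^{O(\TW)}$ states per bag. If anything, your handling of the join node is slightly more careful than the paper's sketch, which says to sum the deficiencies from the two children without explicitly mentioning the subtraction of the double-counted same-colored neighbors inside the shared bag.
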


\begin{proof}
We describe a dynamic programming algorithm which uses standard techniques, and
hence we sketch some of the details. Suppose that we are given a rooted nice
tree decomposition of $G$ (we use here the definition of nice tree
decomposition given in \cite{BodlaenderK08}).  For every bag $B$ of the
decomposition we denote by $B^\downarrow$ the set of vertices of $G$ that
appear in $B$ and bags below it in the decomposition. For a coloring
$c:V\to\{1,\ldots,\C\}$ we say that the partial type of a vertex $u\in B$ is a
pair consisting of $c(u)$ and $|c^{-1}(c(u))\cap N(u) \cap B^\downarrow|$. In
words, the type of a vertex is its color and its deficiency in the graph
induced by $B^\downarrow$.  Clearly, if $c$ is a valid coloring, any vertex can
have at most $\C\cdot(\D+1)$ types.  Hence, if we define the type of $B$ as a
tuple containing the types of its vertices, any bag can have one of at most
$(\C\cdot(\D+1))^{\TW}$ types.

Our dynamic programming algorithm will now construct a table which for every
bag $B$ and every possible bag type decides if there is a coloring of
$B^\downarrow$ with the specified type for which all vertices of
$B^\downarrow\setminus B$ have deficiency at most $\D$. The table is easy to
construct for leaf bags and forget bags. For introduce bags we consider all
possible colors of the new vertex, and for each color we appropriately compute
its deficiency and update the deficiency of its neighbors in the bag, rejecting
solutions where a vertex reaches deficiency $\D+1$. Finally, for join bags we
consider any pair of partial solutions from the two children bags that agree on
the colors of all vertices of the bag and compute the deficiency of each vertex
as the sum of its deficiencies in the two solutions.
\end{proof}

We now recall the following theorem connecting $\omega(G)$ and $\TW(G)$ for
chordal graphs.

\begin{theorem}\label{thm:chordal-basic}

(\cite{RS86,Bodlaender98}) In chordal graphs $\omega(G)=\TW(G)+1$. Furthermore, an
optimal tree decomposition of a chordal graph can be computed in polynomial
time.

\end{theorem}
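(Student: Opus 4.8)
The statement is a classical fact (cited here to \cite{RS86,Bodlaender98}), so the plan is to assemble the standard argument: prove the two inequalities of $\omega(G)=\TW(G)+1$ separately, then justify the algorithmic claim. For the direction $\TW(G)\ge \omega(G)-1$, which actually holds for every graph, I would use the standard ``clique in a bag'' lemma: in any tree decomposition of $G$, every clique of $G$ is contained in a single bag. The cleanest proof goes through the Helly property for subtrees of a tree: for each vertex $v$ of the clique, the bags containing $v$ induce a subtree $T_v$ of the decomposition tree; any two of these subtrees intersect, because an edge $vw$ forces a bag containing both $v$ and $w$; and pairwise-intersecting subtrees of a tree share a common node. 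Applying this to a maximum clique of size $\omega(G)$ shows every decomposition has width at least $\omega(G)-1$.

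For the reverse inequality $\TW(G)\le \omega(G)-1$ on chordal graphs, I would exploit a perfect elimination ordering $v_1,\dots,v_n$, which exists precisely because $G$ is chordal: each $v_i$ is simplicial in the residual graph $G_i:=G[\{v_i,\dots,v_n\}]$. Define bags $B_i=\{v_i\}\cup(N(v_i)\cap\{v_{i+1},\dots,v_n\})$. By simpliciality, each $B_i$ is a clique of $G$, hence $|B_i|\le \omega(G)$, so any tree decomposition built from these bags has width at most $\omega(G)-1$. To organize them into a tree, I would proceed by induction from $i=n$ down to $i=1$: the set $S_i:=N(v_i)\cap\{v_{i+1},\dots,v_n\}$ is a clique of $G_{i+1}$, so by the clique-in-a-bag lemma applied to the (inductively valid) decomposition of $G_{i+1}$ formed by $B_{i+1},\dots,B_n$, there is a bag $B_j$, $j>i$, with $S_i\subseteq B_j$; attach $B_i$ as a child of such a $B_j$. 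A short check verifies the three tree-decomposition axioms: every edge incident to $v_i$ reaches a vertex of $S_i\subseteq B_i$ (older edges are handled inductively), and for every $v_k$ with $k>i$ the set of bags containing $v_k$ stays connected, since $v_k\in B_i$ implies $v_k\in S_i\subseteq B_j$ and $B_i$ is adjacent to $B_j$. This yields a tree decomposition of width $\omega(G)-1$, so $\TW(G)\le\omega(G)-1$, and equality follows. (Equivalently one may cite the clique-tree characterization of chordal graphs and take the maximal cliques as bags.)

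For the algorithmic part, I would recall that chordal graphs are recognizable, and a perfect elimination ordering is computable, in linear time via Lexicographic BFS followed by a verification pass. Plugging that ordering into the construction above produces the bags $B_i$ and their tree structure in polynomial (in fact linear) time, giving an optimal tree decomposition.

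The main obstacle is the second inequality: one must be careful that the greedy attachment of the bags $B_i$ really produces a \emph{valid} tree decomposition, and in particular that the per-vertex subtrees remain connected. That verification is exactly where the clique-in-a-bag lemma is invoked a second time, so the whole argument depends on establishing that lemma cleanly at the outset; everything after it is routine bookkeeping.
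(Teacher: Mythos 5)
The paper offers no proof of this statement---it is imported as a known result from the cited references---so there is nothing internal to compare against; your write-up is the standard textbook argument and it is correct. Both halves check out: the clique-in-a-bag lemma via the Helly property for subtrees gives $\TW(G)\ge\omega(G)-1$, and the perfect-elimination-ordering construction (with the minor caveat that when $S_i=\emptyset$ the bag $B_i$ may be attached anywhere, and noting that a bag $B_j$ can contain $v_k$ only for $j\le k$, which is what makes the parent-chain argument for connectivity terminate at $B_k$) gives the matching upper bound and, combined with LexBFS, the polynomial-time construction.
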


Together with Lemma \ref{lem:basic2} this gives the following algorithm for
chordal graphs.

\begin{theorem}\label{thm:split-alg}

\DC\ can be solved in time $(\C\D)^{O(\C\D)}n^{O(1)}$ in chordal graphs.

\end{theorem}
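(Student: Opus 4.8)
The plan is to combine Theorem~\ref{thm:tw}, Theorem~\ref{thm:chordal-basic}, and Lemma~\ref{lem:basic2} directly. The key observation is that the treewidth-based dynamic program of Theorem~\ref{thm:tw} runs in time $(\C\D)^{O(\TW)}n^{O(1)}$, so to obtain the claimed bound for chordal graphs it suffices to bound $\TW(G)$ by $O(\C\D)$ on any yes-instance. First I would dispose of the trivial case where $G$ has no $(\C,\D)$-coloring at all, or equivalently argue that we may assume throughout that a coloring exists (if it does not, the algorithm will simply fail to find one and report \textsc{no}). Then, assuming $G$ admits a $(\C,\D)$-coloring, Lemma~\ref{lem:basic2} gives $\omega(G)\le \C(\D+1)$, and Theorem~\ref{thm:chordal-basic} turns this into $\TW(G)=\omega(G)-1\le \C(\D+1)-1 = O(\C\D)$.

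Next I would address how the algorithm actually proceeds without knowing in advance whether a $(\C,\D)$-coloring exists: we first use the polynomial-time procedure of Theorem~\ref{thm:chordal-basic} to compute an optimal tree decomposition of $G$ and read off $\TW(G)$. If $\TW(G) > \C(\D+1)-1$, then by the contrapositive of the argument above $G$ has no $(\C,\D)$-coloring and we can immediately report \textsc{no}. Otherwise $\TW(G)=O(\C\D)$, and we feed this tree decomposition into the algorithm of Theorem~\ref{thm:tw}, which then runs in time $(\C\D)^{O(\TW)}n^{O(1)} = (\C\D)^{O(\C\D)}n^{O(1)}$, as claimed.

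There is essentially no hard part here: the theorem is a straightforward corollary, and all the genuine work has already been done in Theorems~\ref{thm:tw} and~\ref{thm:chordal-basic} and Lemma~\ref{lem:basic2}. The only thing to be mildly careful about is the order of operations — compute the decomposition and check the treewidth bound \emph{before} invoking the DP, so that the running time is controlled even on no-instances — but this is routine. I would present the proof in two or three sentences along exactly these lines.

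\begin{proof}
We first compute an optimal tree decomposition of $G$ in polynomial time using
Theorem \ref{thm:chordal-basic}, obtaining the exact value $\TW(G)$. If
$\TW(G) > \C(\D+1)-1$, then by Theorem \ref{thm:chordal-basic} we have
$\omega(G) = \TW(G)+1 > \C(\D+1)$, so by Lemma \ref{lem:basic2} the graph $G$
admits no $(\C,\D)$-coloring and we report \textsc{no}. Otherwise
$\TW(G) = O(\C\D)$, and we run the algorithm of Theorem \ref{thm:tw} on the
computed decomposition; it correctly decides the instance in time
$(\C\D)^{O(\TW(G))}n^{O(1)} = (\C\D)^{O(\C\D)}n^{O(1)}$. \qed
\end{proof}
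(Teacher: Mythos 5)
Your proof is correct and follows exactly the same route as the paper's: compute an optimal tree decomposition via Theorem \ref{thm:chordal-basic}, reject immediately if the clique number (equivalently, treewidth plus one) exceeds $\C(\D+1)$ by Lemma \ref{lem:basic2}, and otherwise run the DP of Theorem \ref{thm:tw} on a decomposition of width $O(\C\D)$. Your extra care about checking the bound before invoking the DP is implicit in the paper's version as well; there is nothing to add.
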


\begin{proof}
We use Theorem \ref{thm:chordal-basic} to compute an optimal tree decomposition
of the input graph and its maximum clique size. If $\omega(G)>\C(\D+1)$ then we
can immediately reject by Lemma \ref{lem:basic2}. Otherwise, we know that
$\TW(G)\le \C(\D+1)$ from Theorem \ref{thm:chordal-basic}, so we apply the
algorithm of Theorem \ref{thm:tw}. 
\end{proof}

\section{Conclusions}

Our results indicate that \DC\ is significantly harder than \textsc{Graph
Coloring}, even on classes where the latter is easily in P. Though we have
completely characterized the complexity of the problem on split and chordal
graphs, its tractability on interval and proper interval graphs remains an
interesting open problem as already posed in \cite{HavetKS09}.

Beyond this, the results of this paper point to several potential future
directions. First, the algorithms we have given for cographs are both XP
parameterized by $\C$ or $\D$. Is it possible to obtain FPT algorithms? On a
related question, is it possible to obtain a faster sub-exponential time
algorithm for \DC\ on cographs? Second, is it possible to find other natural
classes of graphs, beyond trivially perfect graphs, which are structured enough
to make \DC\ tractable? Finally, in this paper we have not considered the
question of approximation algorithms. Though in general \DC\ is likely to be
quite hard to approximate (as a consequence of the hardness of \textsc{Graph
Coloring}), it seems promising to also investigate this question in classes
where \textsc{Graph Coloring} is in P.

\bibliographystyle{abbrvnat}
\bibliography{defective}

\begin{thebibliography}{34}
\providecommand{\natexlab}[1]{#1}
\providecommand{\url}[1]{\texttt{#1}}
\expandafter\ifx\csname urlstyle\endcsname\relax
  \providecommand{\doi}[1]{doi: #1}\else
  \providecommand{\doi}{doi: \begingroup \urlstyle{rm}\Url}\fi

\bibitem[Achuthan et~al.(2011)Achuthan, Achuthan, and
  Simanihuruk]{AchuthanAS11}
N.~Achuthan, N.~R. Achuthan, and M.~Simanihuruk.
\newblock On minimal triangle-free graphs with prescribed k-defective chromatic
  number.
\newblock \emph{Discrete Mathematics}, 311\penalty0 (13):\penalty0 1119--1127,
  2011.
\newblock \doi{10.1016/j.disc.2010.08.013}.
\newblock URL \url{http://dx.doi.org/10.1016/j.disc.2010.08.013}.

\bibitem[Andrews and Jacobson(1985)]{andrews1985generalization}
J.~A. Andrews and M.~S. Jacobson.
\newblock On a generalization of chromatic number.
\newblock \emph{Congressus Numerantium}, 47:\penalty0 33--48, 1985.

\bibitem[Ara{\'{u}}jo et~al.(2012)Ara{\'{u}}jo, Bermond, Giroire, Havet,
  Mazauric, and Modrzejewski]{AraujoBGHMM12}
J.~Ara{\'{u}}jo, J.~Bermond, F.~Giroire, F.~Havet, D.~Mazauric, and
  R.~Modrzejewski.
\newblock Weighted improper colouring.
\newblock \emph{J. Discrete Algorithms}, 16:\penalty0 53--66, 2012.
\newblock \doi{10.1016/j.jda.2012.07.001}.
\newblock URL \url{http://dx.doi.org/10.1016/j.jda.2012.07.001}.

\bibitem[Archdeacon(1987)]{Archdeacon87}
D.~Archdeacon.
\newblock A note on defective colorings of graphs in surfaces.
\newblock \emph{Journal of Graph Theory}, 11\penalty0 (4):\penalty0 517--519,
  1987.
\newblock \doi{10.1002/jgt.3190110408}.
\newblock URL \url{http://dx.doi.org/10.1002/jgt.3190110408}.

\bibitem[Archetti et~al.(2015)Archetti, Bianchessi, Hertz, Colombet, and
  Gagnon]{ArchettiBHCG15}
C.~Archetti, N.~Bianchessi, A.~Hertz, A.~Colombet, and F.~Gagnon.
\newblock Directed weighted improper coloring for cellular channel allocation.
\newblock \emph{Discrete Applied Mathematics}, 182:\penalty0 46--60, 2015.
\newblock \doi{10.1016/j.dam.2013.11.018}.
\newblock URL \url{http://dx.doi.org/10.1016/j.dam.2013.11.018}.

\bibitem[Bang{-}Jensen and Halld{\'{o}}rsson(2015)]{Bang-JensenH15}
J.~Bang{-}Jensen and M.~M. Halld{\'{o}}rsson.
\newblock Vertex coloring edge-weighted digraphs.
\newblock \emph{Inf. Process. Lett.}, 115\penalty0 (10):\penalty0 791--796,
  2015.
\newblock \doi{10.1016/j.ipl.2015.05.007}.
\newblock URL \url{http://dx.doi.org/10.1016/j.ipl.2015.05.007}.

\bibitem[Bermond et~al.(2010)Bermond, Havet, Huc, and Sales]{BermondHHS10}
J.~Bermond, F.~Havet, F.~Huc, and C.~L. Sales.
\newblock Improper coloring of weighted grid and hexagonal graphs.
\newblock \emph{Discrete Math., Alg. and Appl.}, 2\penalty0 (3):\penalty0
  395--412, 2010.
\newblock \doi{10.1142/S1793830910000747}.
\newblock URL \url{http://dx.doi.org/10.1142/S1793830910000747}.

\bibitem[Bodlaender(1998)]{Bodlaender98}
H.~L. Bodlaender.
\newblock A partial \emph{k}-arboretum of graphs with bounded treewidth.
\newblock \emph{Theor. Comput. Sci.}, 209\penalty0 (1-2):\penalty0 1--45, 1998.
\newblock \doi{10.1016/S0304-3975(97)00228-4}.
\newblock URL \url{http://dx.doi.org/10.1016/S0304-3975(97)00228-4}.

\bibitem[Bodlaender and Koster(2008)]{BodlaenderK08}
H.~L. Bodlaender and A.~M. C.~A. Koster.
\newblock Combinatorial optimization on graphs of bounded treewidth.
\newblock \emph{Comput. J.}, 51\penalty0 (3):\penalty0 255--269, 2008.
\newblock \doi{10.1093/comjnl/bxm037}.
\newblock URL \url{http://dx.doi.org/10.1093/comjnl/bxm037}.

\bibitem[Borodin et~al.(2013)Borodin, Kostochka, and Yancey]{BorodinKY13}
O.~V. Borodin, A.~V. Kostochka, and M.~Yancey.
\newblock On 1-improper 2-coloring of sparse graphs.
\newblock \emph{Discrete Mathematics}, 313\penalty0 (22):\penalty0 2638--2649,
  2013.
\newblock \doi{10.1016/j.disc.2013.07.014}.
\newblock URL \url{http://dx.doi.org/10.1016/j.disc.2013.07.014}.

\bibitem[Brandst{\"a}dt et~al.(1999)Brandst{\"a}dt, Le, and Spinrad]{BLS99}
A.~Brandst{\"a}dt, V.~B. Le, and J.~P. Spinrad.
\newblock \emph{Graph classes: a survey}.
\newblock SIAM, 1999.

\bibitem[{Choi} and {Esperet}(2016)]{ChoiE16}
I.~{Choi} and L.~{Esperet}.
\newblock {Improper coloring of graphs on surfaces}.
\newblock \emph{ArXiv e-prints}, Mar. 2016.

\bibitem[Cowen et~al.(1997)Cowen, Goddard, and Jesurum]{CowenGJ97}
L.~Cowen, W.~Goddard, and C.~E. Jesurum.
\newblock Defective coloring revisited.
\newblock \emph{Journal of Graph Theory}, 24\penalty0 (3):\penalty0 205--219,
  1997.
\newblock \doi{10.1002/(SICI)1097-0118(199703)24:3<205::AID-JGT2>3.0.CO;2-T}.
\newblock URL
  \url{http://dx.doi.org/10.1002/(SICI)1097-0118(199703)24:3<205::AID-JGT2>3.0.CO;2-T}.

\bibitem[Cowen et~al.(1986)Cowen, Cowen, and Woodall]{CCW86}
L.~J. Cowen, R.~H. Cowen, and D.~R. Woodall.
\newblock Defective colorings of graphs in surfaces: Partitions into subgraphs
  of bounded valency.
\newblock \emph{Journal of Graph Theory}, 10\penalty0 (2):\penalty0 187--195,
  1986.
\newblock ISSN 1097-0118.
\newblock \doi{10.1002/jgt.3190100207}.
\newblock URL \url{http://dx.doi.org/10.1002/jgt.3190100207}.

\bibitem[Cygan et~al.(2015)Cygan, Fomin, Kowalik, Lokshtanov, Marx, Pilipczuk,
  Pilipczuk, and Saurabh]{CyganFKLMPPS15}
M.~Cygan, F.~V. Fomin, L.~Kowalik, D.~Lokshtanov, D.~Marx, M.~Pilipczuk,
  M.~Pilipczuk, and S.~Saurabh.
\newblock \emph{Parameterized Algorithms}.
\newblock Springer, 2015.
\newblock ISBN 978-3-319-21274-6.
\newblock \doi{10.1007/978-3-319-21275-3}.
\newblock URL \url{http://dx.doi.org/10.1007/978-3-319-21275-3}.

\bibitem[Diestel(2012)]{Diestel}
R.~Diestel.
\newblock \emph{Graph Theory, 4th Edition}, volume 173 of \emph{Graduate texts
  in mathematics}.
\newblock Springer, 2012.
\newblock ISBN 978-3-642-14278-9.

\bibitem[Frick(1993)]{frick1993survey}
M.~Frick.
\newblock A survey of (m, k)-colorings.
\newblock \emph{Annals of Discrete Mathematics}, 55:\penalty0 45--57, 1993.

\bibitem[Frick and Henning(1994)]{FrickH94}
M.~Frick and M.~A. Henning.
\newblock Extremal results on defective colorings of graphs.
\newblock \emph{Discrete Mathematics}, 126\penalty0 (1-3):\penalty0 151--158,
  1994.
\newblock \doi{10.1016/0012-365X(94)90260-7}.
\newblock URL \url{http://dx.doi.org/10.1016/0012-365X(94)90260-7}.

\bibitem[Garey and Johnson(1979)]{GJ79}
M.~R. Garey and D.~S. Johnson.
\newblock \emph{Computers and Intractability: A Guide to the Theory of
  NP-Completeness}.
\newblock W. H. Freeman \& Co., New York, NY, USA, 1979.
\newblock ISBN 0716710447.

\bibitem[Goddard and Xu(2016)]{GoddardX16}
W.~Goddard and H.~Xu.
\newblock Fractional, circular, and defective coloring of series-parallel
  graphs.
\newblock \emph{Journal of Graph Theory}, 81\penalty0 (2):\penalty0 146--153,
  2016.
\newblock \doi{10.1002/jgt.21868}.
\newblock URL \url{http://dx.doi.org/10.1002/jgt.21868}.

\bibitem[Golumbic(1978)]{Gol78}
M.~C. Golumbic.
\newblock Trivially perfect graphs.
\newblock \emph{Discrete Mathematics}, 24\penalty0 (1):\penalty0 105--107,
  1978.
\newblock \doi{10.1016/0012-365X(78)90178-4}.
\newblock URL \url{http://dx.doi.org/10.1016/0012-365X(78)90178-4}.

\bibitem[Gr{\"o}tschel et~al.(1988)Gr{\"o}tschel, Lov{\'a}sz, and
  Schrijver]{GLS88}
M.~Gr{\"o}tschel, L.~Lov{\'a}sz, and A.~Schrijver.
\newblock \emph{Geometric algorithms and combinatorial optimization}.
\newblock Springer4060 XII, 362 S (Berlin [ua]), 1988.

\bibitem[Gudmundsson et~al.(2016)Gudmundsson, Magn{\'{u}}sson, and
  S{\ae}mundsson]{GudmundssonMS16}
B.~A. Gudmundsson, T.~K. Magn{\'{u}}sson, and B.~O. S{\ae}mundsson.
\newblock Bounds and fixed-parameter algorithms for weighted improper coloring.
\newblock \emph{Electr. Notes Theor. Comput. Sci.}, 322:\penalty0 181--195,
  2016.
\newblock \doi{10.1016/j.entcs.2016.03.013}.
\newblock URL \url{http://dx.doi.org/10.1016/j.entcs.2016.03.013}.

\bibitem[Havet and Sereni(2006)]{HavetS06}
F.~Havet and J.~Sereni.
\newblock Improper choosability of graphs and maximum average degree.
\newblock \emph{Journal of Graph Theory}, 52\penalty0 (3):\penalty0 181--199,
  2006.
\newblock \doi{10.1002/jgt.20155}.
\newblock URL \url{http://dx.doi.org/10.1002/jgt.20155}.

\bibitem[Havet et~al.(2009)Havet, Kang, and Sereni]{HavetKS09}
F.~Havet, R.~J. Kang, and J.~Sereni.
\newblock Improper coloring of unit disk graphs.
\newblock \emph{Networks}, 54\penalty0 (3):\penalty0 150--164, 2009.
\newblock \doi{10.1002/net.20318}.
\newblock URL \url{http://dx.doi.org/10.1002/net.20318}.

\bibitem[Impagliazzo et~al.(2001)Impagliazzo, Paturi, and Zane]{IPZ01}
R.~Impagliazzo, R.~Paturi, and F.~Zane.
\newblock Which problems have strongly exponential complexity?
\newblock \emph{J. Comput. Syst. Sci.}, 63\penalty0 (4):\penalty0 512--530,
  2001.
\newblock \doi{10.1006/jcss.2001.1774}.
\newblock URL \url{http://dx.doi.org/10.1006/jcss.2001.1774}.

\bibitem[Kang(2008)]{Kang08}
R.~J. Kang.
\newblock \emph{Improper colourings of graphs}.
\newblock PhD thesis, University of Oxford, {UK}, 2008.
\newblock URL
  \url{http://ora.ox.ac.uk/objects/uuid:a93d8303-0eeb-4d01-9b77-364113b81a63}.

\bibitem[Kang and McDiarmid(2010)]{KangM10}
R.~J. Kang and C.~McDiarmid.
\newblock The \emph{t}-improper chromatic number of random graphs.
\newblock \emph{Combinatorics, Probability {\&} Computing}, 19\penalty0
  (1):\penalty0 87--98, 2010.
\newblock \doi{10.1017/S0963548309990216}.
\newblock URL \url{http://dx.doi.org/10.1017/S0963548309990216}.

\bibitem[Kang et~al.(2008)Kang, M{\"{u}}ller, and Sereni]{KangMS08}
R.~J. Kang, T.~M{\"{u}}ller, and J.~Sereni.
\newblock Improper colouring of (random) unit disk graphs.
\newblock \emph{Discrete Mathematics}, 308\penalty0 (8):\penalty0 1438--1454,
  2008.
\newblock \doi{10.1016/j.disc.2007.07.070}.
\newblock URL \url{http://dx.doi.org/10.1016/j.disc.2007.07.070}.

\bibitem[Kim et~al.(2014)Kim, Kostochka, and Zhu]{KimKZ14}
J.~Kim, A.~V. Kostochka, and X.~Zhu.
\newblock Improper coloring of sparse graphs with a given girth, {I:} (0,
  1)-colorings of triangle-free graphs.
\newblock \emph{Eur. J. Comb.}, 42:\penalty0 26--48, 2014.
\newblock \doi{10.1016/j.ejc.2014.05.003}.
\newblock URL \url{http://dx.doi.org/10.1016/j.ejc.2014.05.003}.

\bibitem[Kim et~al.(2016)Kim, Kostochka, and Zhu]{KimKZ16}
J.~Kim, A.~V. Kostochka, and X.~Zhu.
\newblock Improper coloring of sparse graphs with a given girth, {II:}
  constructions.
\newblock \emph{Journal of Graph Theory}, 81\penalty0 (4):\penalty0 403--413,
  2016.
\newblock \doi{10.1002/jgt.21886}.
\newblock URL \url{http://dx.doi.org/10.1002/jgt.21886}.

\bibitem[Robertson and Seymour(1986)]{RS86}
N.~Robertson and P.~D. Seymour.
\newblock Graph minors. {II.} algorithmic aspects of tree-width.
\newblock \emph{J. Algorithms}, 7\penalty0 (3):\penalty0 309--322, 1986.
\newblock \doi{10.1016/0196-6774(86)90023-4}.
\newblock URL \url{http://dx.doi.org/10.1016/0196-6774(86)90023-4}.

\bibitem[Seinsche(1974)]{seinsche1974}
D.~Seinsche.
\newblock On a property of the class of n-colorable graphs.
\newblock \emph{Journal of Combinatorial Theory, Series B}, 16\penalty0
  (2):\penalty0 191--193, 1974.

\bibitem[Yan et~al.(1996)Yan, Chen, and Chang]{YanCC96}
J.~Yan, J.~Chen, and G.~J. Chang.
\newblock Quasi-threshold graphs.
\newblock \emph{Discrete Applied Mathematics}, 69\penalty0 (3):\penalty0
  247--255, 1996.
\newblock \doi{10.1016/0166-218X(96)00094-7}.
\newblock URL \url{http://dx.doi.org/10.1016/0166-218X(96)00094-7}.

\end{thebibliography}

\end{document}